\documentclass[a4paper,reqno]{amsart}
\usepackage{amsmath,amssymb,mathrsfs}
\usepackage{txfonts,bm,pifont}
\usepackage{cite}
\usepackage{float}
\usepackage{stackrel}
\usepackage{subcaption}
\usepackage{graphicx,xcolor}
\usepackage{comment}
\usepackage{enumitem}
\usepackage{longtable}
\usepackage{hhline}
\usepackage {diagbox}
\usepackage{hyperref}

\newtheorem{theorem}{Theorem}[section]

\newtheorem{remark}[theorem]{Remark}
\theoremstyle{definition}

\numberwithin{equation}{section}
\numberwithin{figure}{section}
\numberwithin{table}{section}
\newcommand{\wutilde}[1]{\vrule depth 0pt width 0pt%
{\raise0.8pt\hbox{$\smash{{\mathop{#1} \limits_{\displaystyle\widetilde{}}}}$}}}
\newcommand{\wuhat}[1]{\vrule depth 0pt width 0pt%
{\raise0.6pt\hbox{$\smash{{\mathop{#1} \limits_{\displaystyle\widehat{}}}}$}}}

\newcommand{\al}{\alpha}
\newcommand{\be}{\beta}
\newcommand{\de}{\delta}
\newcommand{\ga}{\gamma}

\newcommand{\si}{\sigma}

\newcommand{\la}{\lambda}

\newcommand{\ep}{\epsilon}
\newcommand{\ka}{\kappa}
\newcommand{\PDE}{P$\Delta$E}
\newcommand{\ODE}{O$\Delta$E}
\newcommand{\bbZ}{\mathbb{Z}}

\newcommand{\bbC}{\mathbb{C}}

\newcommand{\set}[2]{\left\{\left. #1 ~\right|~ #2 \right\}}

\makeatletter
\long\def\@makecaption#1#2{
 \vskip 10pt
 \setbox\@tempboxa\hbox{#1. #2}
 \ifdim \wd\@tempboxa >\hsize #1. #2\par \else \hbox
to\hsize{\hfil\box\@tempboxa\hfil}
 \fi}
\makeatother
\makeatletter
\@namedef{subjclassname@2020}{%
  \textup{2020} Mathematics Subject Classification}
\makeatother
\begin{document}
\allowdisplaybreaks

\title[]{Special solutions to five autonomous integrable partial difference equations via the third and sixth Painlev\'e equations and the Garnier system in two variables}
\author{Nobutaka Nakazono}
\address{Institute of Engineering, Tokyo University of Agriculture and Technology, 2-24-16 Nakacho Koganei, Tokyo 184-8588, Japan.}
\email{nakazono@go.tuat.ac.jp}
\begin{abstract}
In this paper, we study special solutions of five autonomous integrable partial difference equations ({\PDE}s).
More precisely, we show that these {\PDE}s admit special solutions that are described by non-autonomous ordinary difference equations arising from B\"acklund transformations of the third and sixth Painlev\'e equations and the Garnier system in two variables.
This result provides a new perspective on the relationship between autonomous integrable {\PDE}s and Painlev\'e-type dynamics.
\end{abstract}

\subjclass[2020]{
33E17, 
33E30, 
34M55,
35Q53, 
37K10, 
39A14, 
39A45
}
\keywords{
discrete integrable systems;
integrable partial difference equation;
exact solution;
discrete KdV equation;
Painlev\'e equations;
Garnier systems;
B\"acklund transformations;
affine Weyl group
}
\maketitle

\section{Introduction}\label{Introduction}
Integrable partial difference equations ({\PDE}s) on lattices play an important role in mathematical physics, appearing in soliton theory and discrete differential geometry.
Among them, integrable two-dimensional {\PDE}s admitting consistency properties form a fundamental class.
See, {\it e.g.}, \cite{HJN2016:MR3587455,BS2008:zbMATH05486618} and references therein.

It is natural that special solutions of non-autonomous integrable {\PDE}s are described in terms of non-autonomous integrable ordinary difference equations ({\ODE}s) (see, {\it e.g.}, \cite{NP1991:MR1098879,GRSWC2005:MR2117991,FJN2008:MR2425981,HHJN2007:MR2303490,OrmerodCM2012:MR2997166,HHNS2015:MR3317164,OrmerodCM2014:MR3210633,nakazono2022discrete}).
By contrast, the fact that special solutions of autonomous integrable {\PDE}s can be described in terms of non-autonomous integrable {\ODE}s is far from obvious, and such phenomena have been reported only in a limited number of works (see, {\it e.g.}, \cite{JKMN2021:zbMATH07653201,nakazono2026variationdKdV,NRGO2001:MR1819383}).
The mechanisms by which non-autonomous structures emerge at the level of special solutions of autonomous {\PDE}s therefore remain not yet fully understood.

In this paper, we show that the following five autonomous two-dimensional {\PDE}s admit special solutions described in terms of the non-autonomous {\ODE}s arising from the B\"acklund transformations of the third and sixth Painlev\'e equations and the Garnier system in two variables.
Remarkably, although the equations themselves are autonomous, their special solutions are governed by non-autonomous {\ODE}s.
\begin{itemize}
\item 
Hirota's discrete Korteweg-de Vries (dKdV) equation \cite{HirotaR1977:MR0460934}:
\begin{equation}\label{eqn:auto_dkdv}
 u_{l+1,m+1}-u_{l,m}=\dfrac{~1~}{u_{l,m+1}}-\dfrac{~1~}{u_{l+1,m}}.
\end{equation}
\item 
The $Q1_{\de=1}$ equation \cite{ABS2003:MR1962121,ABS2009:MR2503862}:
\begin{equation}\label{eqn:auto_Q1}
 \dfrac{(u_{l,m}-u_{l,m+1})(u_{l+1,m}-u_{l+1,m+1})}{\be(\be-\al)}
 -\dfrac{(u_{l,m}-u_{l+1,m})(u_{l,m+1}-u_{l+1,m+1})}{\al(\be-\al)}
 =1.
\end{equation}
\item 
A {\PDE} found by Hietarinta and Viallet (corresponding to the case $r_3=-1$, $r_4=0$, and $u=0$ in Equation (19) of \cite{HV2007:Searching}):
\begin{equation}\label{eqn:auto_HV}
 (u_{l,m}-u_{l,m+1})(u_{l+1,m}-u_{l+1,m+1})=u_{l,m+1}-u_{l+1,m}.
\end{equation}
Hereafter, Equation \eqref{eqn:auto_HV} is referred to as the HV equation.
\item 
The lattice sine-Gordon (lsG) equation \cite{volkov1992quantum,bobenko1993discrete}:
\begin{equation}\label{eqn:auto_lsG}
 \dfrac{~u_{l+1,m+1}~}{u_{l,m}}
 =\left(\dfrac{\al-\be\,u_{l+1,m}}{\be-\al\,u_{l+1,m}}\right)
 \left(\dfrac{\be-\al\,u_{l,m+1}}{\al-\be\,u_{l,m+1}}\right).
\end{equation}
\item 
The discrete Volterra (dVolterra) equation \cite{NC1995:MR1329559,hirota1995conserved}:
\begin{equation}\label{eqn:auto_dV}
 \dfrac{~u_{l+1,m+1}~}{u_{l,m}}=\dfrac{1+u_{l+1,m}}{1+u_{l,m+1}}.
\end{equation}
\end{itemize}
Here, $(l, m)\in\bbZ^2$ are lattice parameters, $u_{l,m}\in \bbC$ is a dependent variable, and $\al,\be\in\bbC^\ast$ are parameters.

\begin{remark}
The $Q1_{\de=1}$ equation \eqref{eqn:auto_Q1} and the lsG equation \eqref{eqn:auto_lsG} each have exactly one essential parameter.
Indeed, by setting $u_{l,m}=\al\,U_{l,m}$ and $\ga=\al\,\be^{-1}$, the parameters of Equation \eqref{eqn:auto_Q1} reduce to one parameter $\ga$.
Similarly, by setting $\ga=\al\,\be^{-1}$, Equation \eqref{eqn:auto_lsG} also depends on only one parameter.
\end{remark}

\begin{remark}
The HV equation \eqref{eqn:auto_HV} can be obtained from the $Q1_{\de=1}$ equation \eqref{eqn:auto_Q1} by a limiting procedure.
Indeed, substituting
\begin{equation}
 \al=\ep(\ep+1),\quad
 \be=\ep^2,\quad
 u_{l,m}=\ep\left(2\ep U_{l,m}-(\ep+1)l-\ep\,m\right),
\end{equation}
into Equation \eqref{eqn:auto_Q1} and taking the limit $\ep\to 0$, one obtains Equation \eqref{eqn:auto_HV}.
\end{remark}

The construction of the special solutions in this paper does not rely on any known reduction procedure.
In a related setting, special solutions of {\PDE}s are known as {\it discrete Painlev\'e transcendent solutions}
(hereafter referred to as dP solutions) \cite{nakazono2022discrete} if
(i) the evolution of the {\PDE} along each lattice direction corresponds to the time evolution of a Painlev\'e-type {\ODE}, and
(ii) the special solutions of the {\PDE} are given as rational functions of the dependent variables of these Painlev\'e-type {\ODE}s.
The special solutions obtained in this paper can also be regarded as belonging to this class in this sense.
However, since our approach treats difference equations as B\"acklund transformations of differential Painlev\'e equations and Garnier systems, and places emphasis not only on Painlev\'e equations but also on Garnier systems, we do not explicitly use the terminology of dP solutions in what follows.
\subsection{Background}
In this subsection, we briefly describe the integrable systems considered in this paper.

\subsubsection*{\bf Equations \eqref{eqn:auto_dkdv}--\eqref{eqn:auto_dV}}
Here we give brief remarks on Equations \eqref{eqn:auto_dkdv}--\eqref{eqn:auto_dV}.
\begin{itemize}
\item
The dKdV equation \eqref{eqn:auto_dkdv} is a discrete version of the Korteweg-de Vries (KdV) equation\cite{KDV1895:zbMATH02679684}:
\begin{equation}\label{eqn:KdV}
 w_t+6ww_x+w_{xxx}=0,
\end{equation}
where $(t,x)\in\bbC^2$ and $w=w(t,x)\in\bbC$,
which is known as a mathematical model for waves on shallow water surfaces.
\item
The lsG equation \eqref{eqn:auto_lsG} is a discrete version of the sine-Gordon equation:
\begin{equation}\label{eqn:sg}
 \phi_{tt}-\phi_{xx}+\sin{\phi}=0,
\end{equation}
where $(t,x)\in\bbC^2$ and $\phi=\phi(t,x)\in\bbC$,
which is known as the equation of motion for a row of pendulums suspended from a rod and coupled by torsion springs.
\item
In \cite{NC1995:MR1329559}, it is stated that the dVolterra equation \eqref{eqn:auto_dV} is a discrete version of the Volterra-Kac-van Moerbeke equation \cite{KM1975:MR0369953}:
\begin{equation}
\begin{cases}
 \dfrac{dR_1}{dt}=-e^{-R_2},\\[0.8em]
 \dfrac{dR_n}{dt}=e^{-R_{n-1}}-e^{-R_{n+1}},\quad n\in\bbZ_{\geq2},
\end{cases}
\end{equation}
where $t\in\bbC$ and $R_k=R_k(t)\in\bbC$.
On the other hand, in \cite{hirota1995conserved}, the dVolterra equation \eqref{eqn:auto_dV} is referred to as the discrete Lotka-Volterra equation of type I, and it is shown to be a discrete analogue of the Lotka-Volterra equation \cite{hirota1976n}:
\begin{equation}
\dfrac{dv_n}{dt}=v_n(v_{n-1}+v_{n+1}),
\end{equation}
where $t\in\bbC$ and $v_k=v_k(t)\in\bbC$.
\item
A relation among the four variables $\{u_{l,m},u_{l+1,m},u_{l,m+1},u_{l+1,m+1}\}$ given by an irreducible multilinear polynomial in four variables, such as Equations \eqref{eqn:auto_dkdv}--\eqref{eqn:auto_dV}, is called a {\it quad-equation}.
As a notion of integrability for quad-equations, {\it the consistency around a cube (CAC) property} is well known
(see \cite{NQC1983:MR719638,NCWQ1984:MR763123,QNCL1984:MR761644,NS1998:zbMATH01844203,NW2001:MR1869690} for details).
There also exists another integrability notion, called {\it the consistency around a broken cube (CABC) property}, which is closely related to the CAC property (see \cite{JN2021:zbMATH07476241} for details).
For the five {\PDE}s considered in this paper, the following statements hold.
\begin{itemize}
\item
Equations \eqref{eqn:auto_dkdv}--\eqref{eqn:auto_lsG} possess the CAC property.
For the CAC property of the dKdV equation \eqref{eqn:auto_dkdv} and the lsG equation \eqref{eqn:auto_lsG}, see \cite{nakazono2024consistency};
for that of the $Q1_{\de=1}$ equation \eqref{eqn:auto_Q1}, see \cite{ABS2003:MR1962121};
and for that of the HV equation \eqref{eqn:auto_HV}, see Appendix~\ref{appendix:CAC}.
\item
Equations \eqref{eqn:auto_dkdv}, \eqref{eqn:auto_lsG}, and \eqref{eqn:auto_dV} possess the CABC property.
For the CABC property of the dKdV equation \eqref{eqn:auto_dkdv} and the lsG equation \eqref{eqn:auto_lsG}, see \cite{JN2021:zbMATH07476241,nakazono2024consistency};
for that of the dVolterra equation \eqref{eqn:auto_dV}, see Appendix~\ref{appendix:CABC}.
\end{itemize}
\end{itemize}

\subsubsection*{\bf Painlev\'e equations and Garnier systems}
In the early 20th century, to find a new class of special functions, Painlev\'e\cite{PainleveP1900:zbMATH02665472,PainleveP1902:MR1554937,PainleveP1907:zbMATH02647172}  and Gambier\cite{GambierB1910:MR1555055} classified all the ordinary differential equations of the type 
\begin{equation}
 y''=F(y',y,t),
\end{equation}
where $y=y(t)$, $'=d/dt$, and $F$ is a function meromorphic in $t\in\bbC$ and rational in $y$ and $y'$, 
with the Painlev\'e property (solutions do not have movable singularities other than poles).
As a result, they obtained six new equations that are collectively referred to as Painlev\'e equations.
The Painlev\'e equations are numbered beginning with one: P$_{\rm I}$, $\dots$, P$_{\rm VI}$,
and starting from the sixth Painlev\'e equation, we can, through appropriate limiting processes, obtain other Painlev\'e equations:
$$\begin{array}{ccccccc}
 {\rm P}_{\rm VI}~(D_4^{(1)})&\to&{\rm P}_{\rm V}~(A_3^{(1)})&\to&{\rm P}_{\rm III}~(2A_1^{(1)})&\\
 &&\downarrow&&\downarrow&&\\
 &&{\rm P}_{\rm IV}~(A_2^{(1)})&\to&{\rm P}_{\rm II}~(A_1^{(1)})&\to&{\rm P}_{\rm I}
\end{array}$$
It is well known that the Painlev\'e equations, except for P$_{\rm I}$, have B\"acklund transformations, which collectively form (extended) affine Weyl groups. 
The symbols inside the parentheses of the diagram above indicate the types of affine Weyl groups.
(See \cite{OKSO2006:MR2277519,OkamotoK1987:MR927186,OkamotoK1987:MR914314,OkamotoK1987:MR916698,OkamotoK1986:MR854008,book_NoumiM2004:MR2044201,SakaiH2001:MR1882403,KNY2017:MR3609039} for the details.)
Note that P$_{\rm VI}$ was found by Fuchs \cite{FuchsR1905:quelques} through his study of linear differential equations, prior to the work of Painlev\'e {\it et al.}

Garnier systems are multivariable generalizations of P$_{\rm VI}$, which were derived by Garnier \cite{GarnierR1912:MR1509146} using the method developed by Fuchs \cite{FuchsR1905:quelques}.
At present, as with the Painleve equations, Garnier systems have been studied from various perspectives, including B\"acklund transformations \cite{KimuraH1990:MR1078402,KO1984:MR776915,SuzukiT2005:MR2177118,TsudaT2003:MR1998470,TsudaT2003:MR1987136}, special solutions \cite{OK1986:zbMATH03962350,TsudaT2003:MR1998470,TsudaT2006:MR2263717,YamadaY2009:zbMATH05588071}, and applications to discrete holomorphic functions \cite{JKMNS2017:MR3741826}.
See, {\it e.g.}, \cite{TakenawaT2024:RIMS} for details on Garnier systems.

In this paper, we focus in particular on P$_{\rm III}$, P$_{\rm VI}$, and the Garnier system in two variables.
\subsection{Notation and Terminology}
In this subsection, we summarize the notation and terminology used in this paper.
\begin{itemize}
\item
We denote the $J$-th Painlev\'e equation by {\rm P}$_{\rm J}$, and the Garnier system in $2$-variables by {\rm Garnier}$_{\rm 2}$.
\item
For a given $A \in \mathbb{C}$, when the notation $A^{M/N}$ with $M, N \in \mathbb{Z}$ is used, we choose one branch of $A^{1/|N|}$ such that its $|N|$-th power equals $A$.  
If both notations $A^{M_1/N_1}$ with $M_1, N_1 \in \mathbb{Z}$ and $A^{M_2/N_2}$ with $M_2, N_2 \in \mathbb{Z}$ appear simultaneously, we choose one branch of $A^{1/N_3}$ such that its $N_3$-th power equals $A$, where $N_3$ is the least common multiple of $|N_1|$ and $|N_2|$.
\item
For matrices $A$ and $B$, the symbol $AB$ means the matrix product $A.B$.
\item 
For transformations $s$ and $r$, the symbol $sr$ means the composite transformation $s\circ r$.
\item 
In the context of transformations, the ``$1$" signifies the identity transformation.
\item
For transformation $s$, the relation $s^\infty=1$ implies that there is no positive integer $k$ such as
$s^k=1$.
\item 
If an equation number is accompanied by a subscript such as ``$l\to l+1$", it means that the variable $l$ in the corresponding equation should be replaced by $l+1$.
For example, for the equation
\begin{equation}\label{eqn:example1}
 a_l+2a_{l+1}+3a_{l+2}=0,
\end{equation}
the notation \eqref{eqn:example1}$_{l\to l+1}$ represents
\begin{equation}
 a_{l+1}+2a_{l+2}+3a_{l+3}=0.
\end{equation}
The same convention applies when subscripts such as ``$l=0$" or other symbols are used.
For instance, the notation \eqref{eqn:example1}$_{l=0}$ is equivalent to
\begin{equation}
 a_0+2a_1+3a_2=0.
\end{equation}
\end{itemize}

\subsection{Outline of the paper}
This paper is organized as follows.
In \S \ref{section:main}, we present the main results showing that Equations \eqref{eqn:auto_dkdv}--\eqref{eqn:auto_dV} admit special solutions via {\rm P}$_{\rm III}$, {\rm P}$_{\rm VI}$, or {\rm Garnier}$_{\rm 2}$, depending on the equation.
In \S \ref{section:Backlund_P3P6Garnier}, we explain how the transformations $T_i$ used in \S \ref{section:main} are derived from the symmetry groups formed by B\"acklund transformations of {\rm P}$_{\rm III}$, {\rm P}$_{\rm VI}$, and {\rm Garnier}$_{\rm 2}$.
Concluding remarks are given in \S \ref{ConcludingRemarks}.
Appendix \ref{appendix:additive_dKdV} shows that a non-autonomous dKdV equation admits a special solution via {\rm P}$_{\rm III}$. 
Appendix \ref{appendix:CAC} shows that the HV equation \eqref{eqn:auto_HV} has the CAC property, and Appendix \ref{appendix:CABC} shows that the dVolterra equation \eqref{eqn:auto_dV} has the CABC property.

\section{Main results}\label{section:main}
In this section, we show that Equations \eqref{eqn:auto_dkdv}--\eqref{eqn:auto_dV} admit special solutions via {\rm P}$_{\rm III}$, {\rm P}$_{\rm VI}$, or {\rm Garnier}$_{\rm 2}$, depending on the equation.
The correspondence between the special solutions presented in this paper is summarized in Table \ref{table:solutions}.
How the transformations $T_i$ used in this section are derived from the symmetry groups (that is, the groups of B\"acklund transfomrations) of {\rm P}$_{\rm III}$, {\rm P}$_{\rm VI}$, and {\rm Garnier}$_{\rm 2}$ is explained in \S \ref{section:Backlund_P3P6Garnier}.

\begin{table}[htp]
\begin{center}
\begin{tabular}{|l||c|c|c|}
\hline
&{\rm P}$_{\rm III}$&{\rm P}$_{\rm VI}$&{\rm Garnier}$_{\rm 2}$\\[-0.5em]
&(See \S \ref{subsection:main_P3})&(See \S \ref{subsection:main_P6})&(See \S \ref{subsection:main_Garnier})\\
\hhline{|=||=|=|=|}
The dKdV equation \eqref{eqn:auto_dkdv}&$\bigcirc$&$\times$&$\times$\\
\hline
The $Q1_{\de=1}$ equation \eqref{eqn:auto_Q1}&$\times$&$\bigcirc$&$\times$\\
\hline
The HV equation \eqref{eqn:auto_HV}&$\bigcirc$&$\bigcirc$&$\times$\\
\hline
The lsG equation \eqref{eqn:auto_lsG}&$\times$&$\bigcirc$&$\bigcirc$\\
\hline
The dVolterra equation \eqref{eqn:auto_dV}&$\bigcirc$&$\bigcirc$&$\bigcirc$\\
\hline
\end{tabular}
\end{center}
\caption{Correspondence between the special solutions presented in this paper. Note that the symbol ``$\times$" in the table does not indicate nonexistence, but rather that such solutions have not been found in the present study.}
\label{table:solutions}
\end{table}

\subsection{The third Painlev\'e equation}\label{subsection:main_P3}
Let $t\in\bbC$ be the independent variable, $p=p(t)\in\bbC$ and $q=q(t)\in\bbC$ dependent variables, and $a_1,a_2\in\bbC$ parameters.
{\rm P}$_{\rm III}$ is equivalent to the Hamiltonian system:
\begin{equation}\label{eqn:Hamiltonian_system_P3}
 \dfrac{d q}{d t}=\dfrac{\partial H_{\rm III}}{\partial p},\quad
 \dfrac{d p}{d t}=-\dfrac{\partial H_{\rm III}}{\partial q},
\end{equation}
with the Hamiltonian \cite{KNY2017:MR3609039,OkamotoK1987:MR927186}:
\begin{equation}
 H_{\rm III}=H(p,q,t\,;a_1,a_2)=\dfrac{1}{t}\left(p(p-1)q^2+(a_1+a_2)pq+tp-a_2q\right).
\end{equation}
We define the birational transformations $T_1$ and $T_2$ by their actions on $\{a_1,a_2,t,p,q\}$ as follows:
\begin{subequations}\label{eqns:action_T1T2_P3}
\begin{align}
 &T_1:(a_1,a_2,t)\mapsto (a_1+1,a_2,t),\\
 &T_2:(a_1,a_2,t)\mapsto (a_1,a_2+1,t),\\
 &T_1(p)=-\dfrac{tp}{T_1(q)^2}-\dfrac{a_2}{T_1(q)},\quad
 T_1(q)=\dfrac{t(p-1)}{a_1+(p-1)q},\\
 &T_2(p)=\dfrac{t(p-1)}{T_2(q)^2}-\dfrac{a_1}{T_2(q)}+1,\quad
 T_2(q)=-\dfrac{tp}{pq+a_2}.
\end{align}
\end{subequations}
By direct computation, these transformations satisfy the following relations as actions on $\{a_1,a_2,t,p,q\}$:
\begin{equation}
 T_1\,\dfrac{d}{d t}=\dfrac{d}{d t}\,T_1,\quad
 T_2\,\dfrac{d}{d t}=\dfrac{d}{d t}\,T_2,\quad
 T_1T_2=T_2T_1.
\end{equation}
Hence, the transformations $T_1$ and $T_2$ are mutually commuting B\"acklund transformations of {\rm P}$_{\rm III}$ \eqref{eqn:Hamiltonian_system_P3}.
Accordingly, we define $a_1^{(k_1)}$ and $a_2^{(k_2)}$ by
\begin{equation}
 a_1^{(k_1)}={T_1}^{k_1}(a_1),\quad
 a_2^{(k_2)}={T_2}^{k_2}(a_2),
\end{equation}
and define $p^{(k_1,k_2)}$, $q^{(k_1,k_2)}$, and $H_{\rm III}^{(k_1,k_2)}$ by
\begin{subequations}
\begin{align}
 &p^{(k_1,k_2)}={T_1}^{k_1}{T_2}^{k_2}(p),\quad
 q^{(k_1,k_2)}={T_1}^{k_1}{T_2}^{k_2}(q),\\
 &H_{\rm III}^{(k_1,k_2)}=H\left(p^{(k_1,k_2)},q^{(k_1,k_2)},t\,;a_1^{(k_1)},a_2^{(k_2)}\right),
\end{align}
\end{subequations}
so that
\begin{equation}
 \dfrac{d q^{(k_1,k_2)}}{d t}=\dfrac{\partial H_{\rm III}^{(k_1,k_2)}}{\partial p^{(k_1,k_2)}},\quad
 \dfrac{d p^{(k_1,k_2)}}{d t}=-\dfrac{\partial H_{\rm III}^{(k_1,k_2)}}{\partial q^{(k_1,k_2)}},
\end{equation}
is again equivalent to {\rm P}$_{\rm III}$.
Note that, in the present setting, {\rm P}$_{\rm III}$ is regarded as an equation with independent variable $t$, dependent variables $p^{(k_1,k_2)}$ and $q^{(k_1,k_2)}$, and parameters $a_1^{(k_1)}$ and $a_2^{(k_2)}$.
Moreover, from the actions of $T_1$ and $T_2$ in \eqref{eqns:action_T1T2_P3}, we obtain the relations among the parameters:
\begin{equation}
 a_1^{(k_1+1)}=a_1^{(k_1)}+1,\quad
 a_2^{(k_2+1)}=a_2^{(k_2)}+1,
\end{equation}
the {\ODE} in the $k_1$-direction:
\begin{subequations}
\begin{align}
 &p^{(k_1+1,k_2)}
 =-\dfrac{tp^{(k_1,k_2)}}{{q^{(k_1+1,k_2)}}^2}-\dfrac{a_2^{(k_2)}}{q^{(k_1+1,k_2)}},\label{eqn:ODE_T1T2_P3_1}\\
 &q^{(k_1+1,k_2)}
 =\dfrac{t(p^{(k_1,k_2)}-1)}{a_1^{(k_1)}+q^{(k_1,k_2)}(p^{(k_1,k_2)}-1)},\label{eqn:ODE_T1T2_P3_2}
\end{align}
\end{subequations}
and the {\ODE} in the $k_2$-direction:
\begin{subequations}
\begin{align} 
 &p^{(k_1,k_2+1)}
 =\dfrac{t(p^{(k_1,k_2)}-1)}{{q^{(k_1,k_2+1)}}^2}-\dfrac{a_1^{(k_1)}}{q^{(k_1,k_2+1)}}+1,\label{eqn:ODE_T1T2_P3_3}\\
 &q^{(k_1,k_2+1)}
 =-\dfrac{tp^{(k_1,k_2)}}{p^{(k_1,k_2)}q^{(k_1,k_2)}+a_2^{(k_2)}}.\label{eqn:ODE_T1T2_P3_4}
\end{align}
\end{subequations}
We obtain the following theorem.
\begin{theorem}\label{theorem:P3solutions}
The following hold.
\begin{itemize}
\item[\rm (i)]
The dKdV equation \eqref{eqn:auto_dkdv} admits the following special solution:
\begin{equation}\label{eqn:dkdv_P3sol}
 u_{l,m}=\dfrac{q^{(l,m)}}{t^{1/2}}.
\end{equation}
\item[\rm (ii)]
The HV equation \eqref{eqn:auto_HV} admits the following special solutions:
\begin{subequations}\label{eqns:HV_P3sol}
\begin{align}
 &u_{l,m}=-\dfrac{t\,\Big(H_{\rm III}^{(l,m)}-1\Big)}{a_1^{(l)}},\\
 &u_{l,m}=-\dfrac{t\,H_{\rm III}^{(m,l)}}{a_2^{(l)}}.
\end{align}
\end{subequations}
\item[\rm (iii)]
The dVolterra equation \eqref{eqn:auto_dV} admits the following special solutions:
\begin{subequations}\label{eqns:LV_P3sol}
\begin{align}
 &u_{l,m}=-\dfrac{\Big(p^{(l,m)}-1\Big)q^{(l,m)}}{a_1^{(l)}}-1,\\
 &u_{l,m}=-\dfrac{p^{(m,l)}q^{(m,l)}}{a_2^{(l)}}-1.
\end{align}
\end{subequations} 
\end{itemize}
\end{theorem}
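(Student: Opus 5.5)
The plan is to verify each quad-equation directly, by expressing all four corner values $u_{l,m},u_{l+1,m},u_{l,m+1},u_{l+1,m+1}$ as rational functions of the data $(p,q)=(p^{(l,m)},q^{(l,m)})$, $t$, $a_1^{(l)}$, $a_2^{(m)}$ at the single vertex $(l,m)$. Since $T_1$ and $T_2$ commute, this is unambiguous: $q^{(l+1,m)}=T_1(q)$, $q^{(l,m+1)}=T_2(q)$, and $q^{(l+1,m+1)}=T_2T_1(q)=T_1T_2(q)$, all evaluated at the shifted parameters. Each quad-equation then becomes a rational identity in $(p,q,t,a_1,a_2)$. Because the parameters enter only through $a_1^{(l)}=a_1+l$ and $a_2^{(m)}=a_2+m$, it suffices to check the identity at $(l,m)=(0,0)$ with generic $a_1,a_2$.

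Brute-force composition is messy, so I would first derive a few short "edge relations" from \eqref{eqn:ODE_T1T2_P3_1}--\eqref{eqn:ODE_T1T2_P3_4}:
\begin{itemize}
\item From \eqref{eqn:ODE_T1T2_P3_2}: $t/q^{(1,0)}=a_1/(p-1)+q$.
\item From \eqref{eqn:ODE_T1T2_P3_4}: $t/q^{(0,1)}=-q-a_2/p$.
\item From \eqref{eqn:ODE_T1T2_P3_1}: $p^{(1,0)}q^{(1,0)}+a_2=-tp/q^{(1,0)}$.
\end{itemize}
Feeding the third relation into \eqref{eqn:ODE_T1T2_P3_4} at $(1,0)$ gives the compact diagonal relation $q^{(1,1)}p=p^{(1,0)}q^{(1,0)}$. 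There is an analogous relation obtained from \eqref{eqn:ODE_T1T2_P3_3} and \eqref{eqn:ODE_T1T2_P3_2} in the other order.

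For (i), substituting \eqref{eqn:dkdv_P3sol} and multiplying by $t^{1/2}$, the dKdV equation becomes
\begin{equation*}
 q^{(1,1)}-q=\frac{t}{q^{(0,1)}}-\frac{t}{q^{(1,0)}}.
\end{equation*}
By the edge relations, the right-hand side is $-2q-a_2/p-a_1/(p-1)$. So the task reduces to the single identity
\begin{equation*}
 q^{(1,1)}=-q-\frac{a_2}{p}-\frac{a_1}{p-1},
\end{equation*}
which I will check from $q^{(1,1)}=p^{(1,0)}q^{(1,0)}/p$ and \eqref{eqn:ODE_T1T2_P3_1}.

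For (iii), I would first rewrite the proposed $u$ using the edge relations. The first solution becomes $u_{l,m}=-(p-1)q/a_1-1=-t(p-1)/(a_1q^{(l+1,m)})$, and similarly for the second. In this form the multiplicative dVolterra equation
\begin{equation*}
 \frac{u_{l+1,m+1}}{u_{l,m}}=\frac{1+u_{l+1,m}}{1+u_{l,m+1}}
\end{equation*}
should telescope into a product of the simple ratios above together with the diagonal relation. The second family, with indices $(m,l)$ and $a_2^{(l)}$, is the same computation with the roles of $T_1$ and $T_2$ interchanged. There the analogous edge relation $p^{(0,1)}q^{(0,1)}+\cdots$ coming from \eqref{eqn:ODE_T1T2_P3_3} plays the part of \eqref{eqn:ODE_T1T2_P3_1}.

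For (ii), I would expand $tH^{(l,m)}_{\rm III}=p(p-1)q^2+(a_1+a_2)pq+tp-a_2q$. I would then try to factor $t(H-1)+\text{const}$ using the edge relations, for example in terms of $(p-1)q+a_1=t(p-1)/q^{(1,0)}$ and $pq+a_2=-tp/q^{(0,1)}$. The aim is to show that the HV combinations $u_{l,m}-u_{l,m+1}$ and $u_{l+1,m}-u_{l+1,m+1}$ collapse to products whose product equals $u_{l,m+1}-u_{l+1,m}$. This is the step I expect to be the main obstacle, since $H$ is quadratic and mixes both directions, and I see no a priori telescoping. If no clean factorization appears, I would fall back on direct symbolic substitution: express $p^{(1,1)}$ and $q^{(1,1)}$ rationally in $(p,q)$ and verify that the numerator of the HV relation vanishes identically.
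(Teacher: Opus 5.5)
Your proposal is correct and is essentially the paper's proof: a direct verification that eliminates shifted variables using \eqref{eqn:ODE_T1T2_P3_1}--\eqref{eqn:ODE_T1T2_P3_4}. For (i) the paper eliminates $q^{(l+1,m)},q^{(l+1,m+1)}$ via \eqref{eqn:ODE_T1T2_P3_2} and arrives at \eqref{eqn:ODE_T1T2_P3_3}, whereas you reduce everything to the base vertex; your identity $q^{(1,1)}=-q-a_2/p-a_1/(p-1)$ does check out, and your fallback for (ii) is exactly the paper's ``direct computation''.

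Two small points. In (iii) the roles are the other way round: the first family reduces via $(p^{(k_1,k_2+1)}-1)q^{(k_1,k_2+1)}=(p^{(k_1,k_2)}-1)q^{(k_1+1,k_2+1)}$ (from \eqref{eqn:ODE_T1T2_P3_3} and \eqref{eqn:ODE_T1T2_P3_2}), and the second via your diagonal relation $p^{(k_1,k_2)}q^{(k_1+1,k_2+1)}=p^{(k_1+1,k_2)}q^{(k_1+1,k_2)}$. Also, (ii) is no obstacle: $tH_{\rm III}^{(k_1+1,k_2)}-tH_{\rm III}^{(k_1,k_2)}=p^{(k_1+1,k_2)}q^{(k_1+1,k_2)}$ and $tH_{\rm III}^{(k_1,k_2+1)}-tH_{\rm III}^{(k_1,k_2)}=(p^{(k_1,k_2+1)}-1)q^{(k_1,k_2+1)}$. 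Hence in both HV families $u_{l,m}-u_{l,m+1}=-(1+v_{l,m+1})$, where $v$ is the corresponding dVolterra solution from (iii), and the remaining check is short.
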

\begin{proof}
We prove (i).
Substituting \eqref{eqn:dkdv_P3sol} into Equation \eqref{eqn:auto_dkdv}, we obtain
\begin{equation}\label{eqn:proof_P3_dKdV_1}
 q^{(l+1,m+1)}-q^{(l,m)}=\dfrac{~t~}{q^{(l,m+1)}}-\dfrac{~t~}{q^{(l+1,m)}}.
\end{equation}
From Equation \eqref{eqn:proof_P3_dKdV_1}, eliminating $q^{(l+1,m+1)}$ and $q^{(l+1,m)}$ by using Equations \eqref{eqn:ODE_T1T2_P3_2}$_{(k_1,k_2)=(l,m+1)}$ and \eqref{eqn:ODE_T1T2_P3_2}$_{(k_1,k_2)=(l,m)}$, we obtain Equation \eqref{eqn:ODE_T1T2_P3_3}$_{(k_1,k_2)=(l,m)}$.
Therefore, \eqref{eqn:dkdv_P3sol} is a special solution of Equation \eqref{eqn:auto_dkdv}.

Statements (ii) and (iii) can be verified by direct computation in the same manner.
This completes the proof.
\end{proof}

\subsection{The sixth Painlev\'e equation}\label{subsection:main_P6}
Let $t\in\bbC$ be the independent variable, $p=p(t)\in\bbC$ and $q=q(t)\in\bbC$ dependent variables, and $a_0,\dots,a_4\in\bbC$ parameters satisfying
\begin{equation}\label{eqn:condition_parameters_P6}
 a_0+a_1+2a_2+a_3+a_4=1.
\end{equation}
{\rm P}$_{\rm VI}$ is equivalent to the Hamiltonian system:
\begin{equation}\label{eqn:Hamiltonian_system_P6}
 \dfrac{d q}{d t}=\dfrac{\partial H_{\rm VI}}{\partial p},\quad
 \dfrac{d p}{d t}=-\dfrac{\partial H_{\rm VI}}{\partial q},
\end{equation}
with the Hamiltonian \cite{OkamotoK1987:MR916698}:
\begin{align}
 H_{\rm VI}
 =&H(p,q,t\,;a_0,a_1,a_2,a_3,a_4)\notag\\
 =&\dfrac{1}{t(t-1)}\Bigg(q(q-1)(q-t)p^2+a_2(a_1+a_2)(q-t)\notag\\
 &\hspace{3.5em}-\Big(a_4(q-1)(q-t)+a_3q(q-t)+(a_0-1)q(q-1)\Big)p\Bigg).
\end{align}
For simplicity, we define the set ${\mathcal S}_{\rm VI}$ by
\begin{equation}
 {\mathcal S}_{\rm VI}=\{a_0,a_1,a_2,a_3,a_4,t,p,q\}.
\end{equation}
We define the birational transformations $T_i$, $i=1,2,3$, by their actions on ${\mathcal S}_{\rm VI}$ as follows:
\begin{subequations}\label{eqns:action_T1T2T3_P6}
\begin{align}
 &T_1:(a_0,a_1,a_2,a_3,a_4,t)\notag\\
 &\hspace{2em}\mapsto(1-a_1-a_2,1-a_0-a_2,a_2-1,1-a_2-a_4,1-a_2-a_3,t),\\
 &T_2:(a_0,a_1,a_2,a_3,a_4,t)\notag\\
 &\hspace{2em}\mapsto(1-a_1-a_2,-a_0-a_2,a_2,1-a_2-a_4,-a_2-a_3,t),\\
 &T_3:(a_0,a_1,a_2,a_3,a_4,t)\notag\\
 &\hspace{2em}\mapsto(1-a_1-a_2,-a_0-a_2,a_2,-a_2-a_4,1-a_2-a_3,t),\\
 &T_1(p)=-\dfrac{(t-q)(pq-tp+a_1+a_2)}{(1-t)t}-\dfrac{a_4(t-q)}{(1-t)q}-\dfrac{a_3(t-q)}{t(1-q)},\\
 &T_1(q)=\dfrac{t\,\Big(T_1(p)-a_2+1\Big)+\Big(a_2-1-t\,T_1(p)\Big)q}{(t-q)T_1(p)},\\
 &T_2(p)=\dfrac{(t-q)(pq-p+a_2)}{(1-t)T_2(q)},\quad
 T_2(q)=\cfrac{(t-q)p-a_2-\dfrac{a_3(1-t)}{1-q}}{(1-q)p-a_2+\dfrac{a_0(1-t)}{t-q}},\\
 &T_3(p)=-\cfrac{q(pq+a_2)\left(pq+a_2+\dfrac{a_0t}{t-q}\right)}{t\left(pq+a_2-\dfrac{(a_2+a_4)t}{t-q}\right)},\quad
 T_3(q)=\cfrac{t\left(pq-a_0-a_4+\dfrac{a_0t}{t-q}\right)}{q\left(pq+a_2+\dfrac{a_0t}{t-q}\right)}.
\end{align}
\end{subequations}
By direct computation, these transformations satisfy the following relations as actions on the set ${\mathcal S}_{\rm VI}$:
\begin{equation}
 T_i\,\dfrac{d}{d t}=\dfrac{d}{d t}\,T_i,\quad
 T_iT_j=T_jT_i,
\end{equation}
where $i,j=1,2,3$.
Hence, the transformations $T_i$, $i=1,2,3$, are mutually commuting B\"acklund transformations of {\rm P}$_{\rm VI}$ \eqref{eqn:Hamiltonian_system_P6}.
Accordingly, we define $a_i^{(k_1,k_2,k_3)}$ for $i=0,1,3,4$, and $a_2^{(k_1)}$ by
\begin{equation}
 a_i^{(k_1,k_2,k_3)}={T_1}^{k_1}{T_2}^{k_2}{T_3}^{k_3}(a_i),\quad i=0,1,3,4,\qquad
 a_2^{(k_1)}={T_1}^{k_1}(a_2),
\end{equation}
and define $p^{(k_1,k_2,k_3)}$, $q^{(k_1,k_2,k_3)}$, and $H_{\rm VI}^{(k_1,k_2,k_3)}$ by
\begin{subequations}
\begin{align}
 &p^{(k_1,k_2,k_3)}={T_1}^{k_1}{T_2}^{k_2}{T_3}^{k_3}(p),\quad
 q^{(k_1,k_2,k_3)}={T_1}^{k_1}{T_2}^{k_2}{T_3}^{k_3}(q),\\
 &H_{\rm VI}^{(k_1,k_2,k_3)}=H\left(p^{(k_1,k_2,k_3)},q^{(k_1,k_2,k_3)},t\,;a_0^{(k_1,k_2,k_3)},a_1^{(k_1,k_2,k_3)},a_2^{(k_1)},a_3^{(k_1,k_2,k_3)},a_4^{(k_1,k_2,k_3)}\right),
\end{align}
\end{subequations}
so that
\begin{equation}
 \dfrac{d q^{(k_1,k_2,k_3)}}{d t}=\dfrac{\partial H_{\rm VI}^{(k_1,k_2,k_3)}}{\partial p^{(k_1,k_2,k_3)}},\quad
 \dfrac{d p^{(k_1,k_2,k_3)}}{d t}=-\dfrac{\partial H_{\rm VI}^{(k_1,k_2,k_3)}}{\partial q^{(k_1,k_2,k_3)}},
\end{equation}
is again equivalent to {\rm P}$_{\rm VI}$.
Note that, from the condition \eqref{eqn:condition_parameters_P6}, we obtain 
\begin{equation}
 a_0^{(k_1,k_2,k_3)}+a_1^{(k_1,k_2,k_3)}+2a_2^{(k_1)}+a_3^{(k_1,k_2,k_3)}+a_4^{(k_1,k_2,k_3)}=1,
\end{equation}
and from the actions of $T_i$, $i=1,2,3$, in \eqref{eqns:action_T1T2T3_P6}, we obtain 
\begin{subequations}
\begin{align}
 &a_2^{(k_1+1)}=a_2^{(k_1)}-1,&&\\
 &a_0^{(k_1+1,k_2,k_3)}=1-a_1^{(k_1,k_2,k_3)}-a_2^{(k_1)},
 &&a_1^{(k_1+1,k_2,k_3)}=1-a_0^{(k_1,k_2,k_3)}-a_2^{(k_1)},\\
 &a_3^{(k_1+1,k_2,k_3)}=1-a_2^{(k_1)}-a_4^{(k_1,k_2,k_3)},
 &&a_4^{(k_1+1,k_2,k_3)}=1-a_2^{(k_1)}-a_3^{(k_1,k_2,k_3)},\\[0.5em]
 &a_0^{(k_1,k_2+1,k_3)}=1-a_1^{(k_1,k_2,k_3)}-a_2^{(k_1)},
 &&a_1^{(k_1,k_2+1,k_3)}=-a_0^{(k_1,k_2,k_3)}-a_2^{(k_1)},\\
 &a_3^{(k_1,k_2+1,k_3)}=1-a_2^{(k_1)}-a_4^{(k_1,k_2,k_3)},
 &&a_4^{(k_1,k_2+1,k_3)}=-a_2^{(k_1)}-a_3^{(k_1,k_2,k_3)},\\[0.5em]
 &a_0^{(k_1,k_2,k_3+1)}=1-a_1^{(k_1,k_2,k_3)}-a_2^{(k_1)},
 &&a_1^{(k_1,k_2,k_3+1)}=-a_0^{(k_1,k_2,k_3)}-a_2^{(k_1)},\\
 &a_3^{(k_1,k_2,k_3+1)}=-a_2^{(k_1)}-a_4^{(k_1,k_2,k_3)},
 &&a_4^{(k_1,k_2,k_3+1)}=1-a_2^{(k_1)}-a_3^{(k_1,k_2,k_3)}.
\end{align}
\end{subequations}
We obtain the following theorem.
\begin{theorem}\label{theorem:P6solutions}
The following hold.
\begin{itemize}
\item[\rm (i)]
The $Q1_{\de=1}$ equation \eqref{eqn:auto_Q1} admits the following special solution:
\begin{equation}\label{eqn:Q1_P6sol}
 u_{l,m}=\dfrac{2t(t-1)^2}{a_2}\left(H_{\rm VI}^{(0,l,m)}+\dfrac{a_2\left(a_1^{(0,l,m)}+a_3^{(0,l,m)}\right)}{2t}+\dfrac{a_2\left(a_1^{(0,l,m)}+a_4^{(0,l,m)}\right)}{2(t-1)}\right),
\end{equation}
with
\begin{equation}
 \al=(t-1)t,\quad
 \be=(t-1)^2.
\end{equation}
\item[\rm (ii)]
The HV equation \eqref{eqn:auto_HV} admits the following special solutions:
\begin{subequations}\label{eqns:HV_P6sol}
\begin{align}
 &u_{l,m}=\dfrac{(t-1)H_{\rm VI}^{(l,m,0)}}{a_2^{(l)}},\\
 &u_{l,m}=\dfrac{t\,H_{\rm VI}^{(l,0,m)}}{a_2^{(l)}}.
\end{align}
\end{subequations}
\item[\rm (iii)]
The lsG equation \eqref{eqn:auto_lsG} admits the following special solutions:
\begin{subequations}\label{eqns:lsG_P6sol}
\begin{align}
 &u_{l,m}=\dfrac{q^{(l,m,0)}}{t^{1/2}}\quad \text{with}\quad \al=t^{1/2}\be,\\
 &u_{l,m}=\dfrac{1-q^{(l,0,m)}}{(1-t)^{1/2}}\quad \text{with}\quad \al=(1-t)^{1/2}\be.
\end{align}
\end{subequations}
\item[\rm (iv)]
The dVolterra equation \eqref{eqn:auto_dV} admits the following special solution:
\begin{subequations}\label{eqns:LV_P6sol}
\begin{align}
 &u_{l,m}=-\dfrac{q^{(l,m,0)}\Big(p^{(l,m,0)}q^{(l,m,0)}-tp^{(l,m,0)}+a_2^{(l)}\Big)}{a_2^{(l)} t},\\
 &u_{l,m}=-\dfrac{p^{(l,0,m)}q^{(l,0,m)}}{a_2^{(l)}}-1,\\
 &u_{l,m}=\dfrac{p^{(0,l,m)}q^{(0,l,m)}-a_4^{(0,l,m)}}{a_1^{(0,l,m)}+a_2+a_4^{(0,l,m)}}.
\end{align}
\end{subequations} 
\end{itemize}
\end{theorem}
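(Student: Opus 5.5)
The proof will follow the pattern of Theorem~\ref{theorem:P3solutions}: substitute the proposed expression for $u_{l,m}$ into the relevant quad-equation, then reduce the resulting identity to the {\ODE}s generated by $T_1,T_2,T_3$ in \eqref{eqns:action_T1T2T3_P6}. For each item, the two lattice directions $l$ and $m$ are realized by a pair of mutually commuting transformations: $(T_2,T_3)$ in (i) and in the third solution of (iv), $(T_1,T_2)$ or $(T_1,T_3)$ in the others. The parameters are then shifted accordingly, for example $a_2^{(l)}=a_2-l$, while $a_2$ stays fixed along $T_2$ and $T_3$. Since $T_iT_j=T_jT_i$, the four corner values $u_{l,m},u_{l+1,m},u_{l,m+1},u_{l+1,m+1}$ are obtained by applying $1,T_i,T_j,T_iT_j$ to the single expression $u_{0,0}$, with $(k_1,k_2,k_3)$ specialized. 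The claim therefore reduces to one rational identity in $\{a_0,\dots,a_4,t,p,q\}$, subject to \eqref{eqn:condition_parameters_P6}, and it is enough to check it at a generic base point.

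For the expressions given directly in $p,q$, namely (iii) and (iv), I will follow the proof of Theorem~\ref{theorem:P3solutions}(i). First, write the quad-equation in terms of $q^{(\cdot)}$ or $p^{(\cdot)}q^{(\cdot)}$. Next, eliminate the corner $(l+1,m+1)$ and one adjacent corner using one direction's {\ODE}, namely the formula for $T_i(q)$ or $T_i(p)$ applied at the two appropriate base points. The result should coincide with the other direction's {\ODE}. As an example, for the first solution in (iii), $u_{l,m}=q^{(l,m,0)}t^{-1/2}$ with $\al=t^{1/2}\be$. The lsG equation becomes
\begin{equation*}
q^{(l+1,m+1,0)}=q^{(l,m,0)}\,\frac{(t-q^{(l+1,m,0)})(1-q^{(l,m+1,0)})}{(1-q^{(l+1,m,0)})(t-q^{(l,m+1,0)})}.
\end{equation*}
This relation should follow from composing the $T_1$ and $T_2$ formulas, after simplification with the relations among the $a_i^{(k_1,k_2,k_3)}$.

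For the Hamiltonian-based solutions, namely (i) and (ii), I will first compute how $H_{\rm VI}$ transforms under each $T_i$. In each case I expect a simple formula of the form
\begin{equation*}
T_i(H_{\rm VI})=H_{\rm VI}+R_i(p,q,t;a),
\end{equation*}
where $R_i$ is a low-degree expression; typically it is $p$ or $q$ times a factor, plus a parameter-dependent constant. This is the standard behaviour of Okamoto's $\tau$-function/Hamiltonian under B\"acklund transformations. The $t$-dependent constants in \eqref{eqn:Q1_P6sol} are presumably chosen exactly to absorb these parameter terms. With these formulas, the differences $u_{l+1,m}-u_{l,m}$ and $u_{l,m+1}-u_{l,m}$ in the HV and $Q1_{\de=1}$ equations become explicit rational functions of $p,q$ at the base point. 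Substituting them, together with the $T_i$ actions, into \eqref{eqn:auto_HV} or \eqref{eqn:auto_Q1} with $\al=(t-1)t$ and $\be=(t-1)^2$ should make both sides agree identically.

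The main obstacle is (i): the $Q1_{\de=1}$ identity is quadratic in the edge differences and involves the complicated maps $T_2$ and $T_3$, whose composite $T_2T_3$ is unwieldy. I would avoid computing $T_2T_3(H_{\rm VI})$ directly. Instead, I would rewrite each edge difference in terms of the intermediate variables $T_2(q)$ and $T_3(q)$, then check the identity with a computer algebra system, factoring out the constraint \eqref{eqn:condition_parameters_P6}. A useful sanity check before the symbolic verification is to evaluate numerically at random parameter values.
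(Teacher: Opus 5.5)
Your proposal is correct and takes essentially the same route as the paper. You substitute each ansatz into the quad-equation and verify the resulting relation by eliminating corner values with the {\ODE}s of the relevant pair among $T_1,T_2,T_3$; for (i) the paper eliminates the $(l+1,m+1)$ and $(l+1,m)$ corners with the $T_2$-{\ODE} and then checks the remainder with the $T_3$-{\ODE}, which is exactly your ``intermediate variables'' strategy. Your reduction to a single rational identity at the base point, via commutativity and the invariance of $t$ (hence of $\al,\be$) and of the constraint \eqref{eqn:condition_parameters_P6}, cleanly justifies what the paper does implicitly at general $(l,m)$. Your guessed form of $T_i(H_{\rm VI})-H_{\rm VI}$ is not needed for the final verification.
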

\begin{proof}
We prove (i).
From the actions of the transformations $T_2$ and $T_3$ in \eqref{eqns:action_T1T2T3_P6}, we obtain the following two {\ODE}s:
\begin{align}
 &\begin{cases}\label{eqn:ODE_T2T3_P6_1}
 ~p^{(0,l+1,m)}=\dfrac{\Big(t-q^{(0,l,m)}\Big)\Big(p^{(0,l,m)}q^{(0,l,m)}-p^{(0,l,m)}+a_2\Big)}{(1-t)q^{(0,l+1,m)}},\\[1em]
 ~q^{(0,l+1,m)}=\cfrac{\Big(t-q^{(0,l,m)}\Big)p^{(0,l,m)}-a_2-\dfrac{a_3^{(0,l,m)}(1-t)}{1-q^{(0,l,m)}}}{\Big(1-q^{(0,l,m)}\Big)p^{(0,l,m)}-a_2+\dfrac{a_0^{(0,l,m)}(1-t)}{t-q^{(0,l,m)}}},
 \end{cases}\\
 &\begin{cases}\label{eqn:ODE_T2T3_P6_2}
 ~p^{(0,l,m+1)}=-\cfrac{q^{(0,l,m)}(p^{(0,l,m)}q^{(0,l,m)}+a_2)\left(p^{(0,l,m)}q^{(0,l,m)}+a_2+\dfrac{a_0^{(0,l,m)}t}{t-q^{(0,l,m)}}\right)}{t\left(p^{(0,l,m)}q^{(0,l,m)}+a_2-\dfrac{(a_2+a_4^{(0,l,m)})t}{t-q^{(0,l,m)}}\right)},\\[3em]
 ~q^{(0,l,m+1)}=\cfrac{t\left(p^{(0,l,m)}q^{(0,l,m)}-a_0^{(0,l,m)}-a_4^{(0,l,m)}+\dfrac{a_0^{(0,l,m)}t}{t-q^{(0,l,m)}}\right)}{q^{(0,l,m)}\left(p^{(0,l,m)}q^{(0,l,m)}+a_2+\dfrac{a_0^{(0,l,m)}t}{t-q^{(0,l,m)}}\right)}.
 \end{cases}
\end{align}
Substituting \eqref{eqn:Q1_P6sol} into Equation \eqref{eqn:auto_Q1}, we obtain a relation among
\begin{equation}
 \Big\{p^{(0,l,m)},p^{(0,l+1,m)},p^{(0,l,m+1)},p^{(0,l+1,m+1)},q^{(0,l,m)},q^{(0,l+1,m)},q^{(0,l,m+1)},q^{(0,l+1,m+1)}\Big\}.
\end{equation}
From the resulting relation, eliminating $p^{(0,l+1,m+1)}$ and $q^{(0,l+1,m+1)}$ by using Equation \eqref{eqn:ODE_T2T3_P6_1}$_{m\to m+1}$, we obtain a relation among
\begin{equation}
 \Big\{p^{(l,m)},p^{(l+1,m)},p^{(l,m+1)},q^{(l,m)},q^{(l+1,m)},q^{(l,m+1)}\Big\}.
\end{equation}
Furthermore, eliminating $p^{(0,l+1,m)}$ and $q^{(0,l+1,m)}$ from the resulting relation by using Equation \eqref{eqn:ODE_T2T3_P6_1}, we obtain a relation among
\begin{equation}
 \Big\{p^{(l,m)},p^{(l,m+1)},q^{(l,m)},q^{(l,m+1)}\Big\}.
\end{equation}
By Equation \eqref{eqn:ODE_T2T3_P6_2}, we can finally verify that the obtained relation is satisfied.
Therefore, \eqref{eqn:Q1_P6sol} is a special solution of Equation \eqref{eqn:auto_Q1}.

Statements (ii), (iii), and (iv) can be verified by direct computation in the same manner.
This completes the proof.
\end{proof}

\subsection{The Garnier system in two variables}\label{subsection:main_Garnier}
Let $t_1,t_2\in\bbC$ be independent variables, $p_i=p_i(t_1,t_2)\in\bbC$ and $q_j=q_j(t_1,t_2)\in\bbC$ $(i,j=1,2)$ dependent variables, and $\theta_1,\theta_2,\ka_0,\ka_1,\ka_\infty\in\bbC$ parameters.
{\rm Garnier}$_{\rm 2}$ is equivalent to the Hamiltonian system:
\begin{equation}\label{eqn:Hamiltonian_system_Garnier}
 \dfrac{\partial q_j}{\partial t_i}=\dfrac{\partial H_i}{\partial p_j},\quad
 \dfrac{\partial p_j}{\partial t_i}=-\dfrac{\partial H_i}{\partial q_j},
\end{equation}
where $i,j=1,2$, with the Hamiltonians \cite{KO1984:MR776915}:
\begin{align}
H_i
=&H_i(p_1,p_2,q_1,q_2,t_1,t_2\,;\theta_1,\theta_2,\ka_0,\ka_1,\ka_\infty)\notag\\
=&\dfrac{1}{t_i(t_i-1)}\Big(q_i(p_1q_1+p_2q_2+\ga)(p_1q_1+p_2q_2+\ga+\ka_{\infty})+t_ip_i(p_iq_i-\theta_i)\notag\\
&\hspace{4em}-\dfrac{t_j(t_i-1)}{t_i-t_j}(p_jq_j-\theta_j)p_jq_i
-\dfrac{t_i(t_i-1)}{t_i-t_j}(p_iq_i-\theta_i)p_iq_j\notag\\
&\hspace{4em}-\dfrac{t_i(t_j-1)}{t_j-t_i}(p_iq_i-\theta_i)p_jq_j
-\dfrac{t_i(t_j-1)}{t_j-t_i}(p_jq_j-\theta_j)p_iq_i\notag\\
&\hspace{4em}-(t_i+1)(p_iq_i-\theta_i)p_iq_i+(\ka_1t_i+\ka_0-1)p_iq_i\Big),
\end{align}
where $(i,j)=(1,2),(2,1)$ and 
\begin{equation}\label{eqn:def_gamma}
 \ga=-\dfrac{\theta_1+\theta_2+\ka_0+\ka_1+\ka_\infty-1}{2}.
\end{equation} 
For simplicity, we define the set ${\mathcal S}_{\rm Garnier}$ by
\begin{equation}
 {\mathcal S}_{\rm Garnier}=\{\theta_1,\theta_2,\ka_0,\ka_1,\ka_\infty,t_1,t_2,p_1,p_2,q_1,q_2\}.
\end{equation}
We define the birational transformations $T_i$, $i=1,2,3$, by their actions on ${\mathcal S}_{\rm Garnier}$ as follows:
\begin{subequations}\label{eqns:action_T1T2T3_Garnier}
\begin{align}
 &T_1:(\theta_1,\theta_2,\ka_0,\ka_1,\ka_\infty,t_1,t_2)\mapsto(\theta_1+1,\theta_2,\ka_0,1-\ka_1,\ka_\infty,t_1,t_2),\\
 &T_2:(\theta_1,\theta_2,\ka_0,\ka_1,\ka_\infty,t_1,t_2)\mapsto(\theta_1,\theta_2+1,\ka_0,1-\ka_1,\ka_\infty,t_1,t_2),\\
 &T_3:(\theta_1,\theta_2,\ka_0,\ka_1,\ka_\infty,t_1,t_2)\mapsto(\theta_1,\theta_2,\ka_0+1,1-\ka_1,\ka_\infty,t_1,t_2),\\
 &T_1(p_1)=\dfrac{(q_1f_3-t_1f_1)(q_1f_3-t_1f_1+\ka_\infty q_1)}{(1-t_1)t_1q_1f_1},\\
 &T_1(p_2)=\dfrac{t_1(f_1-p_2q_1)T_1(p_1)}{t_1f_1-t_2p_2q_1},\\
 &T_1(q_1)=\dfrac{\ka_0t_2q_1-f_1f_4}{t_2T_1(p_1)q_1}-\dfrac{t_1T_1(q_2)}{t_2}+t_1,\\
 &T_1(q_2)=\dfrac{(t_2p_2q_1-t_1f_1)(t_1q_2f_1-t_2q_1f_2)}{(t_1-t_2)t_1T_1(p_1)q_1f_1},\\
 &T_2(p_1)=\dfrac{t_2(f_2-p_1q_2)T_2(p_2)}{t_2f_2-t_1p_1q_2},\\
 &T_2(p_2)=\dfrac{(q_2f_3-t_2f_2)(q_2f_3-t_2f_2+\ka_\infty q_2)}{(1-t_2)t_2q_2f_2},\\
 &T_2(q_1)=\dfrac{(t_1p_1q_2-t_2f_2)(t_1q_2f_1-t_2q_1f_2)}{(t_1-t_2)t_2T_2(p_2)q_2f_2},\\
 &T_2(q_2)=\dfrac{\ka_0 t_1 q_2-f_2f_4}{t_1T_2(p_2)q_2}-\dfrac{t_2T_2(q_1)}{t_1}+t_2,\\
 &T_3(p_1)=\dfrac{f_5}{t_1(\ka_0t_2-p_1f_4)f_4},\\
 &T_3(p_2)=\dfrac{f_5}{t_2(\ka_0t_1-p_2f_4)f_4},\\
 &T_3(q_1)=\dfrac{t_1(\ka_0t_2-p_1f_4)(\ka_0t_2q_1-f_1f_4)}{f_5},\\
 &T_3(q_2)=\dfrac{t_2(\ka_0t_1-p_2f_4)(\ka_0t_1q_2-f_2f_4)}{f_5},
\end{align}
\end{subequations}
where
\begin{subequations}
\begin{align}
 &f_1=p_1q_1-\theta_1,\quad
 f_2=p_2q_2-\theta_2,\quad
 f_3=p_1q_1+p_2q_2+\ga,\\
 &f_4=q_1q_2-(q_1-t_1)(q_2-t_2),\quad
 f_5=(f_3f_4-\ka_0 t_1t_2)(f_3f_4+\ka_\infty f_4-\ka_0 t_1t_2).
\end{align}
\end{subequations}
By direct computation, these transformations satisfy the following relations as actions on the set ${\mathcal S}_{\rm Garnier}$:
\begin{equation}
 T_i\,\partial_{t_k}=\partial_{t_k}T_i,\quad
 T_iT_j=T_jT_i,\quad
\end{equation}
where $i,j=1,2,3$, $k=1,2$, and $\partial_{t_k}=\partial/\partial t_k$.
Hence, the transformations $T_i$, $i=1,2,3$, are mutually commuting B\"acklund transformations of {\rm Garnier}$_{\rm 2}$ \eqref{eqn:Hamiltonian_system_Garnier}.
Accordingly, we define $\theta_1^{(k_1)}$, $\theta_2^{(k_2)}$, $\ka_0^{(k_3)}$, and $\ka_1^{(k_1,k_2,k_3)}$ by
\begin{subequations}
\begin{align}
 &\theta_1^{(k_1)}={T_1}^{k_1}(\theta_1),\quad
 \theta_2^{(k_2)}={T_2}^{k_2}(\theta_2),\quad
 \ka_0^{(k_3)}={T_3}^{k_3}(\ka_0),\\
 &\ka_1^{(k_1,k_2,k_3)}={T_1}^{k_1}{T_2}^{k_2}{T_3}^{k_3}(\ka_1),
\end{align}
\end{subequations}
and define $p_i^{(k_1,k_2,k_3)}$, $q_j^{(k_1,k_2,k_3)}$, $H_k^{(k_1,k_2,k_3)}$ $(i,j,k=1,2)$ by
\begin{subequations}
\begin{align}
 &p_i^{(k_1,k_2,k_3)}={T_1}^{k_1}{T_2}^{k_2}{T_3}^{k_3}(p_i),\quad
 q_j^{(k_1,k_2,k_3)}={T_1}^{k_1}{T_2}^{k_2}{T_3}^{k_3}(q_j),\\
 &H_k^{(k_1,k_2,k_3)}
 =H_k\Big(p_1^{(k_1,k_2,k_3)},p_2^{(k_1,k_2,k_3)},q_1^{(k_1,k_2,k_3)},q_2^{(k_1,k_2,k_3)},t_1,t_2\,;\notag\\
 &\hspace{13em}
 \theta_1^{(k_1)},\theta_2^{(k_2)},\ka_0^{(k_3)},\ka_1^{(k_1,k_2,k_3)},\ka_\infty\Big),
\end{align}
\end{subequations}
so that
\begin{equation}
 \dfrac{\partial q_j^{(k_1,k_2,k_3)}}{\partial t_i}=\dfrac{\partial H_i^{(k_1,k_2,k_3)}}{\partial p_j^{(k_1,k_2,k_3)}},\quad
 \dfrac{\partial p_j^{(k_1,k_2,k_3)}}{\partial t_i}=-\dfrac{\partial H_i^{(k_1,k_2,k_3)}}{\partial q_j^{(k_1,k_2,k_3)}},\quad
 (i,j=1,2),
\end{equation}
is again equivalent to {\rm Garnier}$_{\rm 2}$.
Note that, from the actions of $T_i$, $i=1,2,3$, in \eqref{eqns:action_T1T2T3_Garnier}, we obtain 
\begin{subequations}
\begin{align}
 &\theta_1^{(k_1+1)}=\theta_1^{(k_1)}+1,\quad
 \theta_2^{(k_2+1)}=\theta_2^{(k_2)}+1,\quad
 \ka_0^{(k_3+1)}=\ka_0^{(k_3)}+1,\\
 &\ka_1^{(k_1+1,k_2,k_3)}=1-\ka_1^{(k_1,k_2,k_3)},\quad
 \ka_1^{(k_1,k_2+1,k_3)}=1-\ka_1^{(k_1,k_2,k_3)},\\
 &\ka_1^{(k_1,k_2,k_3+1)}=1-\ka_1^{(k_1,k_2,k_3)}.
\end{align}
\end{subequations}
We obtain the following theorem.
\begin{theorem}\label{theorem:Garniersolutions}
The following hold.
\begin{itemize}
\item[\rm (i)]
The lsG equation \eqref{eqn:auto_lsG} admits the following special solutions:
\begin{subequations}\label{eqns:lsG_Garniersol}
\begin{align}
 &u_{l,m}=\dfrac{{t_1}^{1/2}p_1^{(l,m,0)}}{{t_2}^{1/2}p_2^{(l,m,0)}}\quad 
 \text{with}\quad
 \al=\dfrac{{t_1}^{1/2}}{{t_2}^{1/2}}\,\be,\label{eqn:lsG_Garniersol_1}\\
 &u_{l,m}=-\dfrac{1}{{t_1}^{1/2}(t_1-t_2)^{1/2}}\left(t_2\dfrac{p_2^{(0,l,m)}}{p_1^{(0,l,m)}}-t_1\right)\quad 
 \text{with}\quad
 \al=\dfrac{(t_1-t_2)^{1/2}}{{t_1}^{1/2}}\,\be,\\
 &u_{l,m}=-\dfrac{1}{{t_2}^{1/2}(t_2-t_1)^{1/2}}\left(t_1\dfrac{p_1^{(l,0,m)}}{p_2^{(l,0,m)}}-t_2\right)\quad 
 \text{with}\quad
 \al=\dfrac{(t_2-t_1)^{1/2}}{{t_2}^{1/2}}\,\be.
\end{align}
\end{subequations}
\item[\rm (ii)]
The dVolterra equation \eqref{eqn:auto_dV} admits the following special solution:
\begin{subequations}\label{eqns:LV_Garniersol}
\begin{align}
 &u_{l,m}=\dfrac{p_1^{(l,0,m)}q_1^{(l,0,m)}}{\theta_1^{(l)}}-1,\\
 &u_{l,m}=\dfrac{p_2^{(0,l,m)}q_2^{(0,l,m)}}{\theta_2^{(l)}}-1,\\
 &u_{l,m}=\dfrac{p_1^{(m,0,l)}}{\ka_0^{(l)}t_2}\left(t_1q_2^{(m,0,l)}+t_2q_1^{(m,0,l)}-t_1t_2\right)-1,\\
 &u_{l,m}=\dfrac{p_2^{(0,m,l)}}{\ka_0^{(l)}t_1}\left(t_1q_2^{(0,m,l)}+t_2q_1^{(0,m,l)}-t_1t_2\right)-1.
\end{align}
\end{subequations} 
\end{itemize}
\end{theorem}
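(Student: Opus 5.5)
We verify each formula in Theorem~\ref{theorem:Garniersolutions} directly, as in the proofs of Theorems~\ref{theorem:P3solutions} and \ref{theorem:P6solutions}. Each proposed $u_{l,m}$ is a function on a two-dimensional sublattice of the $(k_1,k_2,k_3)$-lattice generated by the commuting transformations $T_1,T_2,T_3$: the $(k_1,k_2)$-plane with $k_3=0$, the $(k_2,k_3)$-plane with $k_1=0$, or the $(k_1,k_3)$-plane with $k_2=0$. By commutativity, the vertex $(l+1,m+1)$ can be reached either by applying the $l$-shift after the $m$-shift or in the opposite order. Each verification therefore reduces to an identity among rational functions of $p_i,q_i$ at a single vertex.

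Concretely, for \eqref{eqn:lsG_Garniersol_1} we substitute $u_{l,m}=t_1^{1/2}p_1^{(l,m,0)}/(t_2^{1/2}p_2^{(l,m,0)})$ and $\al=t_1^{1/2}t_2^{-1/2}\be$ into \eqref{eqn:auto_lsG}. After clearing the square roots, this gives a relation among $p_1/p_2$ at the four vertices of the quad. Set $X=T_1(p_1)/T_1(p_2)$ and $Y=T_2(p_1)/T_2(p_2)$. From \eqref{eqns:action_T1T2T3_Garnier}, $T_1(p_1)/T_1(p_2)=(t_1f_1-t_2p_2q_1)/(t_1(f_1-p_2q_1))$, which depends only on the unshifted variables, and $Y$ has an analogous form. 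The equation then reads
\begin{equation*}
 T_1T_2\!\left(\frac{p_1}{p_2}\right)=\frac{p_1}{p_2}\cdot R(X)\,R'(Y)
\end{equation*}
for explicit Möbius factors $R,R'$. It suffices to compute $T_1T_2(p_1/p_2)$ as $T_1$ applied to $Y$. This requires only the $T_1$-images of $p_1,p_2,q_1,q_2,\theta_2$, since $t_i$ are fixed. We then check the resulting identity after rewriting everything in the variables $p_i,q_i$.

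The other two lsG solutions, which use the pairs $(T_2,T_3)$ and $(T_1,T_3)$, are handled the same way. The same applies to the four dVolterra solutions in (ii). For the latter, it helps to note that $u+1$ is a simple monomial-type expression, for example $p_1q_1/\theta_1$, so \eqref{eqn:auto_dV} becomes a multiplicative relation. For instance, $T_1(p_1q_1)$ can be extracted from the formulas for $T_1(p_1)$ and $T_1(q_1)$ and then compared with the $T_3$-images. Index swaps such as $(m,0,l)$ only exchange the roles of the two lattice directions. Throughout we use the parameter shifts $\theta_i\mapsto\theta_i+1$, $\ka_0\mapsto\ka_0+1$, $\ka_1\mapsto1-\ka_1$, and the invariance of $\ga$ up to these shifts.

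The main obstacle is the size of the composite $T_1T_3$ or $T_2T_3$ expressions. $T_3$ involves $f_4$ and the quartic $f_5$, so naively composing produces very large rational functions. To keep things manageable, we verify as in the proof of Theorem~\ref{theorem:P6solutions}. First, eliminate the $(l+1,m+1)$ variables using the $l$-direction map applied at $(l,m+1)$. Then eliminate the $(l+1,m)$ variables using the same map at $(l,m)$. Finally, check the remaining relation between vertices $(l,m)$ and $(l,m+1)$ using the $m$-direction map.

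Where possible, we will look for factorized intermediate quantities, such as $\ka_0t_2-p_1f_4$ or $t_1f_1-t_2p_2q_1$, that transform simply under the relevant $T_i$. This should reduce the checks to cancellations of these factors; the final polynomial identities can be confirmed by computer algebra.
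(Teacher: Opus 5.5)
Your proposal is correct and follows essentially the paper's proof: you substitute the ansatz, eliminate the $(l+1,m+1)$ variables with the $l$-direction map at $(l,m+1)$, then the $(l+1,m)$ variables with the same map at $(l,m)$, and check what remains with the $m$-direction map. One small caution: $\ga$ is not invariant under the $T_i$ (for instance $T_1(\ga)=\ga+\ka_1-1$), so it must be evaluated with the shifted parameters at each vertex, as the paper does in $f_3^{(l,m,0)}$.
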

\begin{proof}
We prove that the first solution in (i) is correct.
From the actions of the transformations $T_1$ and $T_2$ in \eqref{eqns:action_T1T2T3_Garnier}, we obtain the following two {\ODE}s:
\begin{align}
 &\begin{cases}\label{eqn:ODE_T1T2_Garnier_1}
 ~p_1^{(l+1,m,0)}=\dfrac{\left(q_1^{(l,m,0)}f_3^{(l,m,0)}-t_1f_1^{(l,m,0)}\right)\left(q_1^{(l,m,0)}f_3^{(l,m,0)}-t_1f_1^{(l,m,0)}+\ka_\infty q_1^{(l,m,0)}\right)}{(1-t_1)t_1q_1^{(l,m,0)}f_1^{(l,m,0)}},\\[1em]
 ~p_2^{(l+1,m,0)}=\dfrac{t_1\left(f_1^{(l,m,0)}-p_2^{(l,m,0)}q_1^{(l,m,0)}\right)p_1^{(l+1,m,0)}}{t_1f_1^{(l,m,0)}-t_2p_2^{(l,m,0)}q_1^{(l,m,0)}},\\[1em]
 ~q_1^{(l+1,m,0)}=\dfrac{\ka_0t_2q_1^{(l,m,0)}-f_1^{(l,m,0)}f_4^{(l,m,0)}}{t_2p_1^{(l+1,m,0)}q_1^{(l,m,0)}}-\dfrac{t_1q_2^{(l+1,m,0)}}{t_2}+t_1,\\[1em]
 ~q_2^{(l+1,m,0)}=\dfrac{\left(t_2p_2^{(l,m,0)}q_1^{(l,m,0)}-t_1f_1^{(l,m,0)}\right)\left(t_1q_2^{(l,m,0)}f_1^{(l,m,0)}-t_2q_1^{(l,m,0)}f_2^{(l,m,0)}\right)}{(t_1-t_2)t_1p_1^{(l+1,m,0)}q_1^{(l,m,0)}f_1^{(l,m,0)}},
 \end{cases}\\
 &\begin{cases}\label{eqn:ODE_T1T2_Garnier_2}
 ~p_1^{(l,m+1,0)}=\dfrac{t_2\left(f_2^{(l,m,0)}-p_1^{(l,m,0)}q_2^{(l,m,0)}\right)p_2^{(l,m+1,0)}}{t_2f_2^{(l,m,0)}-t_1p_1^{(l,m,0)}q_2^{(l,m,0)}},\\[1em]
 ~p_2^{(l,m+1,0)}=\dfrac{\left(q_2^{(l,m,0)}f_3^{(l,m,0)}-t_2f_2^{(l,m,0)}\right)\left(q_2^{(l,m,0)}f_3^{(l,m,0)}-t_2f_2^{(l,m,0)}+\ka_\infty q_2^{(l,m,0)}\right)}{(1-t_2)t_2q_2^{(l,m,0)}f_2^{(l,m,0)}},\\[1em]
 ~q_1^{(l,m+1,0)}=\dfrac{\left(t_1p_1^{(l,m,0)}q_2^{(l,m,0)}-t_2f_2^{(l,m,0)}\right)\left(t_1q_2^{(l,m,0)}f_1^{(l,m,0)}-t_2q_1^{(l,m,0)}f_2^{(l,m,0)}\right)}{(t_1-t_2)t_2p_2^{(l,m+1,0)}q_2^{(l,m,0)}f_2^{(l,m,0)}},\\[1em]
 ~q_2^{(l,m+1,0)}=\dfrac{\ka_0 t_1 q_2^{(l,m,0)}-f_2^{(l,m,0)}f_4^{(l,m,0)}}{t_1p_2^{(l,m+1,0)}q_2^{(l,m,0)}}-\dfrac{t_2q_1^{(l,m+1,0)}}{t_1}+t_2,
 \end{cases}
\end{align}
where
\begin{subequations}
\begin{align}
 &f_1^{(l,m,0)}=p_1^{(l,m,0)}q_1^{(l,m,0)}-\theta_1^{(l)},\quad
 f_2^{(l,m,0)}=p_2^{(l,m,0)}q_2^{(l,m,0)}-\theta_2^{(m)},\\
 &f_3^{(l,m,0)}=p_1^{(l,m,0)}q_1^{(l,m,0)}+p_2^{(l,m,0)}q_2^{(l,m,0)}
 -\dfrac{\theta_1^{(l)}+\theta_2^{(m)}+\ka_0+\ka_1^{(l,m,0)}+\ka_\infty-1}{2},\\
 &f_4^{(l,m,0)}=q_1^{(l,m,0)}q_2^{(l,m,0)}-\left(q_1^{(l,m,0)}-t_1\right)\left(q_2^{(l,m,0)}-t_2\right).
\end{align}
\end{subequations}
Substituting \eqref{eqn:lsG_Garniersol_1} into Equation \eqref{eqn:auto_lsG}, we obtain a relation among
\begin{equation}
 \left\{
 p_1^{(l,m,0)},p_1^{(l+1,m,0)},p_1^{(l,m+1,0)},p_1^{(l+1,m+1,0)},
 p_2^{(l,m,0)},p_2^{(l+1,m,0)},p_2^{(l,m+1,0)},p_2^{(l+1,m+1,0)}
 \right\}.
\end{equation}
From the obtained relation, eliminating $p_1^{(l+1,m+1,0)}$ and $p_2^{(l+1,m+1,0)}$ by using Equation \eqref{eqn:ODE_T1T2_Garnier_1}$_{m\to m+1}$, we obtain a relation among
\begin{equation}
 \left\{
 p_1^{(l,m,0)},p_1^{(l+1,m,0)},p_1^{(l,m+1,0)},
 p_2^{(l,m,0)},p_2^{(l+1,m,0)},p_2^{(l,m+1,0)},
 q_1^{(l,m+1,0)},
 q_2^{(l,m+1,0)}
 \right\}.
\end{equation}
Furthermore, eliminating $p_1^{(l+1,m,0)}$ and $p_2^{(l+1,m,0)}$ from the obtained relation by using Equation \eqref{eqn:ODE_T1T2_Garnier_1}, we obtain a relation among
\begin{equation}
 \left\{
 p_1^{(l,m,0)},p_1^{(l,m+1,0)},
 p_2^{(l,m,0)},p_2^{(l,m+1,0)},
 q_1^{(l,m,0)},q_1^{(l,m+1,0)},
 q_2^{(l,m,0)},q_2^{(l,m+1,0)}
 \right\}.
\end{equation}
Using Equation \eqref{eqn:ODE_T1T2_Garnier_2}, we can show that the final relation obtained above indeed holds.
Hence, \eqref{eqn:lsG_Garniersol_1} is a special solution of Equation \eqref{eqn:auto_lsG}.

The remaining two solutions in (i), as well as (ii), can be verified by direct computation in the same manner.
This completes the proof.
\end{proof}

\section{Derivation of the transformations $T_i$ in \S \ref{section:main} from the symmetry groups}
\label{section:Backlund_P3P6Garnier}
In this section, we explain how the transformations $T_i$ used in \S \ref{section:main} are obtained from the symmetry groups (the groups formed by B\"acklund transformations) of {\rm P}$_{\rm III}$, {\rm P}$_{\rm VI}$, and {\rm Garnier}$_{\rm 2}$.
\subsection{The third Painlev\'e equation}\label{subsection:Weyl_P3}
In this subsection, we present the symmetry group of {\rm P}$_{\rm III}$ \cite{KNY2017:MR3609039,OkamotoK1987:MR927186,SakaiH2001:MR1882403}, and show that the transformations $T_1$ and $T_2$ given in \S \ref{subsection:main_P3} can be derived from this symmetry group.

Let $t\in\bbC$ be the independent variable, $p=p(t)\in\bbC$ and $q=q(t)\in\bbC$ dependent variables, and $a_1,a_2\in\bbC$ parameters.
We define the transformation group
\begin{equation}
 \widetilde{W}(2A_1^{(1)})=\langle s_1,s_2,\pi_1,\pi_2\rangle
\end{equation}
by the actions on the parameters $\{a_1,a_2\}$:
\begin{align*}
 &s_1:(a_1,a_2)\mapsto(-a_1,a_2),
 &&s_2:(a_1,a_2)\mapsto(a_1,-a_2),\\
 &\pi_1:(a_1,a_2)\mapsto(a_2,a_1),
 &&\pi_2:(a_1,a_2)\mapsto(1-a_1,a_2),
\end{align*}
those on the independent variable $t$:
\begin{equation*}
 s_1(t)=t,\quad
 s_2(t)=t,\quad
 \pi_1(t)=-t,\quad
 \pi_2(t)=t,
\end{equation*}
and those on the variables $\{p,q\}$:
\begin{align*}
 &s_1(p)=p,\quad
 s_1(q)=q+\dfrac{a_1}{p-1},\quad
 s_2(p)=p,\quad
 s_2(q)=q+\dfrac{a_2}{p},\\
 &\pi_1(p)=1-p,\quad
 \pi_1(q)=-q,\quad
 \pi_2(p)=-\dfrac{q(pq+a_2)}{t},\quad
 \pi_2(q)=\dfrac{t}{q}.
\end{align*}
As can be verified by direct computation, with respect to the action on the variables $\{p,q\}$, the transformations $\{s_1,s_2,\pi_2\}$ commute with the differential operator $d/dt$ of {\rm P}$_{\rm III}$ \eqref{eqn:Hamiltonian_system_P3}.
Moreover, since $\pi_1(t)=-t$, the transformation $\pi_1$ satisfies the following relation:
\begin{equation}
 \pi_1\,\dfrac{d}{dt}=-\dfrac{d}{dt}\,\pi_1=\dfrac{d}{d\pi_1(t)}\,\pi_1.
\end{equation}
Therefore, each element of $\widetilde{W}(2A_1^{(1)})$ is a B\"acklund transfomration of {\rm P}$_{\rm III}$ \eqref{eqn:Hamiltonian_system_P3}.
The elements of $\widetilde{W}(2A_1^{(1)})$ satisfy the following fundamental relations:
\begin{subequations}
\begin{align}
 &{s_1}^2={s_2}^2=1,\quad
 s_1s_2=s_2s_1,\\
 &{\pi_1}^2={\pi_2}^2=1,\quad
 (\pi_1\pi_2)^4=1,\\
 &\pi_1 s_1=s_2\pi_1,\quad
 \pi_1 s_2=s_1\pi_1,\quad
 (\pi_2s_1)^\infty=1,\quad
 \pi_2 s_2=s_2 \pi_2.
\end{align}
\end{subequations}
The transformations $T_1$ and $T_2$ in \S \ref{subsection:main_P3} are given as elements of $\widetilde{W}(2A_1^{(1)})$ by
\begin{equation}
 T_1=s_1\pi_2,\quad
 T_2=\pi_1s_1\pi_2\pi_1.
\end{equation}
As can also be seen from their actions on the parameters in \eqref{eqns:action_T1T2_P3}, the transformations $T_1$ and $T_2$ are translations in $\widetilde{W}(2A_1^{(1)})$.

\begin{remark}
Define the transformations $\{w^{(1)}_0,w^{(1)}_1,w^{(2)}_0,w^{(2)}_1,\pi^{(12)}\}$ by
\begin{equation}
 w^{(1)}_0=\pi_2,\quad
 w^{(1)}_1=s_1,\quad
 w^{(2)}_0=\pi_1\pi_2\pi_1,\quad
 w^{(2)}_1=s_2,\quad
 \pi^{(12)}=\pi_1.
\end{equation}
Then, the following fundamental relations hold:
\begin{subequations}
\begin{align}
 &\Big(w^{(i)}_0\Big)^2=\Big(w^{(i)}_1\Big)^2=\Big(w^{(i)}_0w^{(i)}_1\Big)^\infty=1,\quad
 i=1,2,\quad
 w^{(1)}_j w^{(2)}_k=w^{(2)}_k w^{(1)}_j,\quad 
 j,k=0,1,\\
 &\Big(\pi^{(12)}\Big)^2=1,\quad
 \pi^{(12)}w^{(1)}_0=w^{(2)}_0\pi^{(12)},\quad
 \pi^{(12)}w^{(1)}_1=w^{(2)}_1\pi^{(12)}.
\end{align}
\end{subequations}
That is, the transformation groups $\langle w^{(1)}_0,w^{(1)}_1\rangle$ and $\langle w^{(2)}_0,w^{(2)}_1\rangle$ each form an affine Weyl group of type $A_1^{(1)}$, and the two groups commute with each other.
Moreover, $\pi^{(12)}$ is a diagram automorphism that exchanges these two extended affine Weyl groups.
(See Figure \ref{fig:diagram_2A1}.)
\end{remark}

\begin{figure}[htbp]
\begin{center}
 \includegraphics[width=0.25\textwidth]{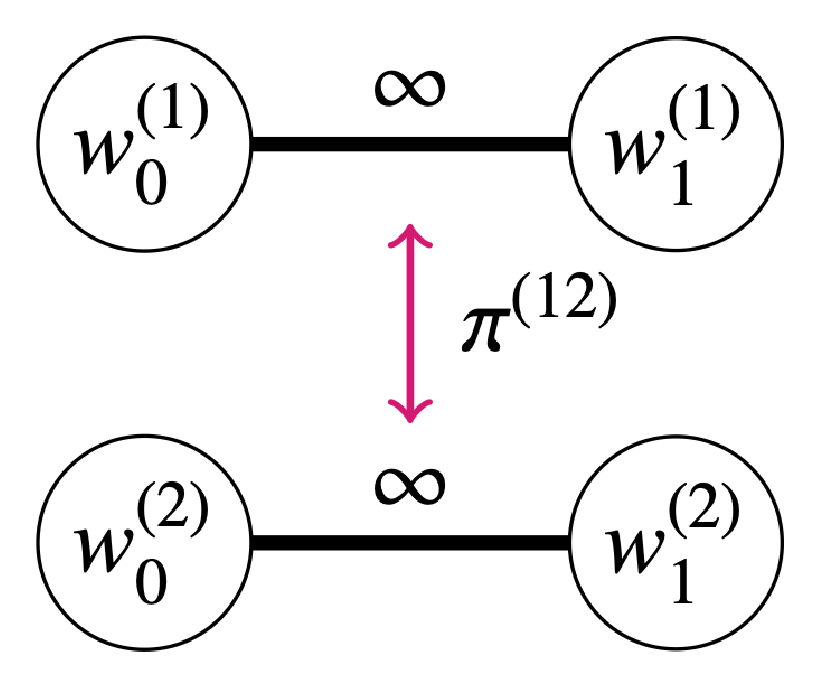}
\end{center}
\caption{\raggedright Coxeter diagram describing the relations among the transformations $\{w^{(1)}_0,w^{(1)}_1,w^{(2)}_0,w^{(2)}_1,\pi^{(12)}\}$.}
\label{fig:diagram_2A1}
\end{figure}

\subsection{The sixth Painlev\'e equation}\label{subsection:Weyl_P6}
Since the details of the discussion are the same as those in \S \ref{subsection:Weyl_P3}, we describe only the necessary information here in a concise manner.
For references on the symmetry group of {\rm P}$_{\rm VI}$, see for example \cite{KNY2017:MR3609039,OkamotoK1987:MR916698,SakaiH2001:MR1882403,JKMNS2017:MR3741826}.

\begin{description}
\item[The symmetry group of {\rm P}$_{\rm VI}$]
\begin{equation}
 \widetilde{W}(D_4^{(1)})=\langle s_0,s_1,s_2,s_3,s_4\rangle\rtimes\langle\si_1,\si_2,\si_3\rangle.
\end{equation}
\item[The actions on the parameters $\{a_0,\dots,a_4\}$]
\begin{align*}
 &s_i(a_i)=-a_i,\quad
 s_i(a_2)=a_2+a_i,\quad
 (i=0,1,3,4),\\
 &s_2(a_j)=a_j+a_2,\quad
 s_2(a_2)=-a_2,\quad
 (j=0,1,3,4),\\ 
 &\si_1:a_0\leftrightarrow a_1,~a_3\leftrightarrow a_4,\quad
 \si_2:a_0\leftrightarrow a_3,~a_1\leftrightarrow a_4,\quad
 \si_3:a_0\leftrightarrow a_4,~a_1\leftrightarrow a_3.
\end{align*}
\item[The actions on the independent variable $t$]
\begin{equation*}
 w(t)=t,\quad w\in\widetilde{W}(D_4^{(1)}).
\end{equation*}
\item[The actions on the variables $\{p,q\}$]
\begin{align*}
 &s_0(p)=p-\dfrac{a_0}{q-t},\quad
 s_2(q)=q+\dfrac{a_2}{p},\quad
 s_3(p)=p-\dfrac{a_3}{q-1},\quad
 s_4(p)=p-\dfrac{a_4}{q},\\
 &\si_1(p)=-\dfrac{(q-t)(pq-tp+a_2)}{t(t-1)},\quad
 \si_1(q)=\dfrac{t(q-1)}{q-t},\\
 &\si_2(p)=-\dfrac{q(pq+a_2)}{t},\quad
 \si_2(q)=\dfrac{t}{q},\quad
 \si_3(p)=\dfrac{(q-1)(pq-p+a_2)}{t-1},\quad
 \si_3(q)=\dfrac{q-t}{q-1}.
\end{align*}
\item[The fundamental relations]
\begin{subequations}
\begin{align}
 &{s_0}^2={s_1}^2={s_2}^2={s_3}^2={s_4}^2=1,\quad
 s_is_j=s_js_i,\quad (s_ks_2)^3=1,\quad i,j,k=0,1,3,4,\\
 &{\si_1}^2={\si_2}^2={\si_3}^2=\si_1\si_2\si_3=1,\quad
 \si_i\si_j=\si_j\si_i,\quad i,j=1,2,3,\\
 &\si_1s_0=s_1\si_1,\quad
 \si_1s_1=s_0\si_1,\quad
 \si_1s_2=s_2\si_1,\quad
 \si_1s_3=s_4\si_1,\quad
 \si_1s_4=s_3\si_1,\\
 &\si_2s_0=s_3\si_2,\quad
 \si_2s_1=s_4\si_2,\quad
 \si_2s_2=s_2\si_2,\quad
 \si_2s_3=s_0\si_2,\quad
 \si_2s_4=s_1\si_2,\\
 &\si_3s_0=s_4\si_3,\quad
 \si_3s_1=s_3\si_3,\quad
 \si_3s_2=s_2\si_3,\quad
 \si_3s_3=s_1\si_3,\quad
 \si_3s_4=s_0\si_3.
\end{align}
\end{subequations}
\item[The transformations $T_i$ in \S \ref{subsection:main_P6}]
\begin{equation}
 T_1=\si_1s_4s_3s_1s_0s_2,\quad
 T_2=\si_3s_1s_4s_2s_1s_4,\quad
 T_3=\si_2s_1s_3s_2s_1s_3.
\end{equation}
\end{description}

\begin{remark}\label{remark:Backlund_P6}
~\\[-1em]
\begin{itemize}
\item 
We follow the convention that parameters and variables not explicitly included in the actions listed in the equations above remain unchanged under the corresponding transformation.  
That is, the transformation acts as the identity on those parameters and variables.
\item
Under the action on the variables $\{p,q\}$, the elements of the group $\widetilde{W}(D_4^{(1)})$ commute with the differential operator $d/dt$ of {\rm P}$_{\rm VI}$ \eqref{eqn:Hamiltonian_system_P6}.
\item 
The group $\widetilde{W}(D_4^{(1)})$ forms an extended affine Weyl group of type $D_4^{(1)}$. 
\item
As can also be seen from the action on the parameters \eqref{eqns:action_T1T2T3_P6}, the transformations $T_i$, $i=1,2,3$, are not translations of $\widetilde{W}(D_4^{(1)})$, but ${T_i}^2$ are translations.
\item
The transformations $T_1$, $T_2$, $T_3$ in \S \ref{subsection:main_P6} are equal to the transformations $\rho_1,\rho_2,\rho_3$ in \cite{JKMNS2017:MR3741826}, respectively.
\end{itemize}
\end{remark}

\subsection{The Garnier system in two variables}\label{subsection:Weyl_Garnier}
Since the details of the discussion are the same as those in \S \ref{subsection:Weyl_P3}, we describe only the necessary information here in a concise manner.
For references on the symmetry group of {\rm Garnier}$_{\rm 2}$, see for example \cite{JKMN2021:zbMATH07653201,KimuraH1990:MR1078402,KO1984:MR776915,SuzukiT2005:MR2177118,TsudaT2003:MR1998470,TsudaT2003:MR1987136}.

\begin{description}
\item[The symmetry group of {\rm Garnier}$_{\rm 2}$]
\begin{equation}\label{eqn:transformationgroup_Garnier}
 {\mathcal G}_{\rm Garnier}=\langle r_1,r_2,r_3,r_4,r_5,r_{34},\si_{12},\si_{23},\si_{34},\si_{45}\rangle.
\end{equation}
\item[The actions on the parameters $\{\theta_1,\theta_2,\ka_0,\ka_1,\ka_\infty\}$]
\begin{align*}
 &r_1(\theta_1)=-\theta_1,\quad
 r_2(\theta_2)=-\theta_2,\quad
 r_3(\ka_0)=-\ka_0,\quad
 r_4(\ka_1)=-\ka_1,\quad
 r_5(\ka_\infty)=-\ka_\infty,\\
 &r_{34}:(\theta_1,\theta_2,\ka_0,\ka_1,\ka_\infty)
 \mapsto(-\theta_1,-\theta_2,1-\ka_0,1-\ka_1,-\ka_\infty),\\
 &\si_{12}:\theta_1\leftrightarrow\theta_2,\quad
 \si_{23}:\theta_2\leftrightarrow\ka_0,\quad
 \si_{34}:\ka_0\leftrightarrow\ka_1,\quad
 \si_{45}:\ka_1\leftrightarrow\ka_\infty.
\end{align*}
\item[The actions on the independent variables $\{t_1,t_2\}$]
\begin{align*}
 &\si_{12}:(t_1,t_2)
 \mapsto(t_2,t_1),
 &&\si_{23}:(t_1,t_2)
 \mapsto\left(\dfrac{t_2-t_1}{t_2-1},\dfrac{t_2}{t_2-1}\right),\\
 &\si_{34}:(t_1,t_2)
 \mapsto\left(\dfrac{1}{t_1},\dfrac{1}{t_2}\right),
 &&\si_{45}:(t_1,t_2)
 \mapsto\left(\dfrac{t_1}{t_1-1},\dfrac{t_2}{t_2-1}\right).
\end{align*}
\item[The actions on the variables $\{p_1,p_2,q_1,q_2\}$]
\begin{align*}
 &r_1(p_1)=p_1-\dfrac{\theta_1}{q_1},\qquad
 r_2(p_2)=p_2-\dfrac{\theta_2}{q_2},\\
 &r_3:(p_1,p_2)\mapsto\left(p_1-\dfrac{\ka_0}{t_1(g_t-1)},p_2-\dfrac{\ka_0}{t_2(g_t-1)}\right),\\
 &r_4:(p_1,p_2)\mapsto\left(p_1-\dfrac{\ka_1}{g_1-1},p_2-\dfrac{\ka_1}{g_1-1}\right),\\
 &r_{34}(p_1)
 =-\dfrac{p_1q_1(p_1q_1+p_2q_2+\ga)(p_1q_1+p_2q_2+\ga+\ka_\infty)}{t_1p_1 (p_1q_1-\theta_1)},\\
 &r_{34}(p_2)
 =-\dfrac{p_2q_2(p_1q_1+p_2q_2+\ga)(p_1q_1+p_2q_2+\ga+\ka_\infty)}{t_2p_2(p_2q_2-\theta_2)},\\
 &r_{34}(q_1)
 =\dfrac{t_1p_1(p_1q_1-\theta_1)}{(p_1q_1+p_2q_2+\ga)(p_1q_1+p_2q_2+\ga+\ka_\infty)},\\
 &r_{34}(q_2)
 =\dfrac{t_2p_2(p_2q_2-\theta_2)}{(p_1q_1+p_2q_2+\ga)(p_1q_1+p_2q_2+\ga+\ka_\infty)},\\
 &\si_{12}:(p_1,p_2,q_1,q_2)
 \mapsto(p_2,p_1,q_2,q_1),\\
 &\si_{23}:(p_1,p_2,q_1,q_2)
 \mapsto\left(\dfrac{(t_2-1)(t_1p_1-t_2p_2)}{t_2-t_1},(1-t_2)p_2,\dfrac{(t_2-t_1)q_1}{t_1(t_2-1)},\dfrac{t_2(g_t-1)}{1-t_2}\right),\\
 &\si_{34}:(p_1,p_2,q_1,q_2)
 \mapsto\left(t_1p_1,t_2p_2,\dfrac{q_1}{t_1},\dfrac{q_2}{t_2}\right),\\
 &\si_{45}(p_1)=(g_1-1)(p_1-\ga-p_1q_1-p_2q_2),\\
 &\si_{45}(p_2)=(g_1-1)(p_2-\ga-p_1q_1-p_2q_2),\\
 &\si_{45}(q_1)=\dfrac{q_1}{g_1-1},\quad
 \si_{45}(q_2)=\dfrac{q_2}{g_1-1},
\end{align*}
where $\ga$ is given by Equation \eqref{eqn:def_gamma}, and $g_t$ and $g_1$ are given by
\begin{equation}
 g_t=\dfrac{q_1}{t_1}+\dfrac{q_2}{t_2},\quad
 g_1=q_1+q_2. 
\end{equation}
\item[The fundamental relations]
\begin{subequations}
\begin{align}
 &{r_1}^2={r_2}^2={r_3}^2={r_4}^2={r_5}^2=1,\quad 
 r_ir_j=r_jr_i\quad(i\ne j),\\
 &{r_{34}}^2=1,\quad
 r_ir_{34}=r_{34}r_i\quad(i=1,2,5),\quad
 (r_3r_{34})^\infty=(r_4r_{34})^\infty=1,\\
 &{\si_{12}}^2={\si_{23}}^2={\si_{34}}^2={\si_{45}}^2=1,\\
 &\si_{12}r_1=r_2\si_{12},\quad
 \si_{12}r_2=r_1\si_{12},\quad
 \si_{12}r_i=r_i\si_{12},\quad (i=3,4,5),\\
 &\si_{23}r_2=r_3\si_{32},\quad
 \si_{23}r_3=r_2\si_{23},\quad
 \si_{23}r_i=r_i\si_{23},\quad (i=1,4,5),\\
 &\si_{34}r_3=r_4\si_{34},\quad
 \si_{34}r_4=r_3\si_{34},\quad
 \si_{34}r_i=r_i\si_{34},\quad (i=1,2,5),\\
 &\si_{45}r_4=r_5\si_{45},\quad
 \si_{45}r_5=r_4\si_{45},\quad
 \si_{45}r_i=r_i\si_{45},\quad (i=1,2,3),\\
 &\si_{12}r_{34}=r_{34}\si_{12},\quad
 \si_{34}r_{34}=r_{34}\si_{34},\quad
 (\si_{23}r_{34})^\infty=(\si_{45}r_{34})^\infty=1,\\
 &\si_{12}\si_{34}=\si_{34}\si_{12},\quad
 \si_{12}\si_{45}=\si_{45}\si_{12},\quad
 \si_{23}\si_{45}=\si_{45}\si_{23},\\
 &(\si_{12}\si_{23})^3=(\si_{23}\si_{34})^3=(\si_{34}\si_{45})^3=1.
\end{align}
\end{subequations}
\item[The transformations $T_i$ in \S \ref{subsection:main_Garnier}]
\begin{equation}
 T_1=r_1r_2r_3r_5r_{14},\quad
 T_2=r_1r_2r_3r_5r_{24},\quad
 T_3=r_1r_2r_3r_5r_{34},
\end{equation}
where
\begin{equation}
 r_{14}=\si_{12}r_{24}\si_{12},\quad
 r_{24}=\si_{23}r_{34}\si_{23}.
\end{equation}
\end{description}

\begin{remark}~\\[-1em]
\begin{itemize}
\item
See the first item of Remark \ref{remark:Backlund_P6} for the convention on how to write the actions.
\item
When acting on the variables $\{p_1,p_2,q_1,q_2\}$, the transformations $\{r_1,\dots,r_5,r_{34}\}$ commute with the differential operators $\{\partial_{t_1},\partial_{t_2}\}$ of {\rm Garnier}$_{\rm 2}$ \eqref{eqn:Hamiltonian_system_Garnier}.
Moreover, for $w\in\{\si_{12},\si_{23},\si_{34},\si_{45}\}$ and $i=1,2$, we have
\begin{equation}
 w\partial_{t_i}=\partial_{w(t_i)}w.
\end{equation}
More explicitly, the relations are given by
\begin{subequations}
\begin{align}
 &\si_{12}\partial_{t_1}=\partial_{t_2}\si_{12}=\partial_{\si_{12}(t_1)}\si_{12},\\
 &\si_{12}\partial_{t_2}=\partial_{t_1}\si_{12}=\partial_{\si_{12}(t_2)}\si_{12},\\
 &\si_{23}\partial_{t_1}
 =\dfrac{\partial t_1}{\partial\si_{23}(t_1)}\,\partial_{t_1}\si_{23}
 =\left(\dfrac{\partial t_1}{\partial\si_{23}(t_1)}\,\partial_{t_1}+\dfrac{\partial t_2}{\partial\si_{23}(t_1)}\,\partial_{t_2}\right)\si_{23}
 =\partial_{\si_{23}(t_1)}\si_{23},\\
 &\si_{23}\partial_{t_2}
 =\left(\dfrac{\partial t_1}{\partial\si_{23}(t_2)}\,\partial_{t_1}+\dfrac{\partial t_2}{\partial\si_{23}(t_2)}\,\partial_{t_2}\right)\si_{23}
 =\partial_{\si_{23}(t_2)}\si_{23},\\
 &\si_{34}\partial_{t_1}
 =\dfrac{dt_1}{d\si_{34}(t_1)}\,\partial_{t_1}\si_{34}
 =\partial_{\si_{34}(t_1)}\si_{34},\\
 &\si_{34}\partial_{t_2}
 =\dfrac{dt_2}{d\si_{34}(t_2)}\,\partial_{t_2}\si_{34}
 =\partial_{\si_{34}(t_2)}\si_{34},\\
 &\si_{45}\partial_{t_1}
 =\dfrac{dt_1}{d\si_{45}(t_1)}\,\partial_{t_1}\si_{45}
 =\partial_{\si_{45}(t_1)}\si_{45},\\
 &\si_{45}\partial_{t_2}
 =\dfrac{dt_2}{d\si_{45}(t_2)}\,\partial_{t_2}\si_{45}
 =\partial_{\si_{45}(t_2)}\si_{45}.
\end{align}
\end{subequations}
\item
The group ${\mathcal G}_{\rm Garnier}$ includes the affine Weyl group of type $B_5^{(1)}$ \cite{SuzukiT2005:MR2177118}. 
We note that the full symmetry group of {\rm Garnier}$_{\rm 2}$ has not yet been completely identified. 
\item
The transformations $T_1$, $T_2$, $T_3$ in \S \ref{subsection:main_Garnier} are equal to the transformations ${\rho_1}^{-1}$, ${\rho_2}^{-1}$, ${\rho_3}^{-1}$ in \cite{JKMN2021:zbMATH07653201}, respectively.
\end{itemize}
\end{remark}

\section{Concluding remarks}\label{ConcludingRemarks}
In this paper, we have shown that Equations \eqref{eqn:auto_dkdv}--\eqref{eqn:auto_dV} admit special solutions via {\rm P}$_{\rm III}$, {\rm P}$_{\rm VI}$, or {\rm Garnier}$_{\rm 2}$, depending on the equation.
It is noteworthy that although Equations \eqref{eqn:auto_dkdv}--\eqref{eqn:auto_dV} are autonomous, they possess special solutions expressed in terms of non-autonomous {\ODE}s, which can be interpreted as B\"acklund transformations of the corresponding non-autonomous differential equations.

The novelty of this work lies in the fact that special solutions of autonomous integrable {\PDE}s can be expressed in terms of the dependent variables of Painlev\'e-type non-autonomous {\ODE}s.
This phenomenon points to an aspect of autonomous integrable {\PDE}s that is not yet fully understood, namely, why their special solutions can be described by Painlev\'e-type dynamics.
It should be emphasized that the special solutions obtained in this paper were found in a non-systematic manner, and that a comprehensive construction method has not yet been established.
If such a systematic method were to be developed, it would be expected to reveal new connections between integrable {\PDE}s and Painlev\'e-type dynamics.

In the paper \cite{nakazono2026variationdKdV}, it was shown that special solutions of a variation of the dKdV equation:
\begin{equation}\label{eqn:variationdKdV}
 \ep\, u_{l+1,m+1}-u_{l,m}=\dfrac{\ep}{u_{l,m+1}}-\dfrac{~1~}{u_{l+1,m}},
\end{equation}
where $\ep\in\bbC^\ast$ is an additional parameter,
are given by solutions of $q$-Painlev\'e equations, which are multiplicative-type discrete Painlev\'e equations (see, {\it e.g.}, \cite{SakaiH2001:MR1882403,KNY2017:MR3609039} for $q$-Painlev\'e equations).
However, since the bilinear difference equation satisfied by the $\tau$ function associated with Equation \eqref{eqn:variationdKdV} is non-autonomous (see Remark 1.1 in \cite{nakazono2026variationdKdV}), it is understandable that special solutions expressed in such non-autonomous {\ODE}s can exist.
On the other hand, the autonomous {\PDE}s \eqref{eqn:auto_dkdv}--\eqref{eqn:auto_dV} are autonomous also from the viewpoint of $\tau$ functions, and nevertheless possess special solutions expressed by non-autonomous {\ODE}s obtained by B\"acklund transformations of Painlev\'e equations and the Garnier system.
In this sense, they provide more mysterious examples.
However, similar phenomena have already been reported in several works, where special solutions are regarded not as the dependent variables of Painlev\'e equations or the Garnier system themselves, but as their $\tau$ functions.
For example, see the work by Nijhoff {\it et al.} \cite{NRGO2001:MR1819383} and its extension by Joshi {\it et al.} \cite{JKMN2021:zbMATH07653201}.
In the former, a special solution of the autonomous version of the lattice Schwarzian KdV equation \cite{NC1995:MR1329559} is constructed by the $\tau$ function of {\rm P}$_{\rm VI}$, while in the latter, it is constructed by the $\tau$ function of {\rm Garnier}$_{\rm 2}$.

Before concluding this section, let us briefly explain our future work.
In the paper \cite{nakazono2026variationdKdV}, it was shown that the autonomous {\PDE} \eqref{eqn:variationdKdV} admits discrete Painlev\'e transcendent solutions (dP solutions) given by $q$-Painlev\'e equations.
Moreover, in the paper \cite{nakazono2022discrete}, it was shown that non-autonomous {\PDE}s admit dP solutions given by $q$-Painlev\'e equations.
Thus, dP solutions obtained from $q$-Painlev\'e equations have been reported for both autonomous and non-autonomous {\PDE}s.
What, then,  about cases other than $q$-Painlev\'e equations?
As mentioned in the Introduction, noting that the special solutions given in this paper can also be called dP solutions, we have shown that {\rm P}$_{\rm III}$, {\rm P}$_{\rm VI}$, and {\rm Garnier}$_{\rm 2}$ provide dP solutions of the autonomous {\PDE}s \eqref{eqn:auto_dkdv}--\eqref{eqn:auto_dV} (see Theorems \ref{theorem:P3solutions}, \ref{theorem:P6solutions} and \ref{theorem:Garniersolutions}).
Moreover, in Appendix \ref{appendix:additive_dKdV}, we show that {\rm P}$_{\rm III}$ also provides a dP solution of a non-autonomous {\PDE}.
Although this is not stated explicitly in \cite{NakazonoN2018:MR3760161}, it was shown that {\rm P}$_{\rm IV}$ and {\rm P}$_{\rm V}$ provide dP solutions of a non-autonomous {\PDE}.
Thus, even in cases other than $q$-Painlev\'e equations, it is possible to construct dP solutions not only for autonomous {\PDE}s but also for non-autonomous {\PDE}s.
While this paper mainly deals with autonomous {\PDE}s, in a forthcoming paper, we will show that dP solutions of various non-autonomous {\PDE}s can be constructed using {\rm P}$_{\rm IV}$ and {\rm P}$_{\rm V}$.
Finally, we note that in the work by Tsoubelis and Xenitidis \cite{TX2009:zbMATH05559172}, special solutions of non-autonomous {\PDE}s were constructed using {\rm P}$_{\rm V}$ and {\rm P}$_{\rm VI}$, and these results can be understood, similarly to those of Nijhoff {\it et al.} \cite{NRGO2001:MR1819383} and Joshi {\it et al.} \cite{JKMN2021:zbMATH07653201}, as being related to underlying $\tau$ function structures.

\subsection*{Acknowledgment}
The author thanks Dr. Dinh T. Tran for pointing out that the HV equation \eqref{eqn:auto_HV} appears in \cite{HV2007:Searching}.
The author is also grateful to Prof. Frank W. Nijhoff for suggesting that the HV equation \eqref{eqn:auto_HV} can be obtained as a limit of the $Q1_{\de=1}$ equation \eqref{eqn:auto_Q1}.
The author acknowledges the use of ChatGPT for proofreading the manuscript and assisting with translation into English.
This work was supported by JSPS KAKENHI, Grant Number JP23K03145.
\appendix
\section{A special solution of a non-autonomous dKdV equation}\label{appendix:additive_dKdV}
In this appendix, we show that the additive-type non-autonomous version of the dKdV equation:
\begin{equation}\label{eqn:additive_dkdv}
 u_{l+1,m+1}-u_{l,m}=\dfrac{{\be_{m+1}}^2-{\al_l}^2}{u_{l,m+1}}-\dfrac{{\be_m}^2-{\al_{l+1}}^2}{u_{l+1,m}},
\end{equation}
where
\begin{equation}
 \al_l=\al_0+l,\quad
 \be_m=\be_0+m,
\end{equation}
and $\al_0,\be_0\in\bbC$ are parameters, admits a special solution via {\rm P}$_{\rm III}$. 
The result is given in the following theorem.
Note that since it can be shown in the same way as in the case of Theorem \ref{theorem:P3solutions}, the proof of the following theorem is omitted.
\begin{theorem}\label{theorem:P3solution_2}
Equation \eqref{eqn:additive_dkdv} admits the following special solution:
\begin{equation}\label{eqn:additive_dkdv_P3sol}
 u_{l,m}=a_2^{(l-m)}\left(\dfrac{1}{p^{(l+m,l-m)}}-1\right),
\end{equation}
with
\begin{equation}
 \al_l=\dfrac{a_1^{(l)}+a_2^{(l)}}{2}\quad \text{and}\quad \be_m=\dfrac{a_1^{(m)}-a_2^{(-m)}}{2}.
\end{equation}
Here, $a_1^{(k_1)}$, $a_2^{(k_2)}$, and $p^{(k_1,k_2)}$ are defined in \S \ref{subsection:main_P3}.
\end{theorem}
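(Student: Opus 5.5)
The plan is to verify Theorem~\ref{theorem:P3solution_2} by direct substitution and elimination, following the proof of Theorem~\ref{theorem:P3solutions}(i).

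\textbf{Step 1: identify the lattice directions.} The solution \eqref{eqn:additive_dkdv_P3sol} involves $p^{(l+m,l-m)}$. A unit step $l\to l+1$ therefore corresponds to applying $T_1T_2$, and a step $m\to m+1$ corresponds to applying $T_1T_2^{-1}$. The four corners of the quad are
\begin{equation*}
 p^{(k_1,k_2)},\quad p^{(k_1+1,k_2+1)},\quad p^{(k_1+1,k_2-1)},\quad p^{(k_1+2,k_2)},
\end{equation*}
where $(k_1,k_2)=(l+m,l-m)$.

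\textbf{Step 2: fix the coefficients.} Using $a_1^{(k+1)}=a_1^{(k)}+1$ and $a_2^{(k+1)}=a_2^{(k)}+1$, we have $\al_l=\al_0+l$ with $\al_0=(a_1+a_2)/2$, and $\be_m=\be_0+m$ with $\be_0=(a_1-a_2)/2$. The coefficients ${\be_{m+1}}^2-{\al_l}^2$ and ${\be_m}^2-{\al_{l+1}}^2$ factor as products of linear forms in $a_1^{(\cdot)}$ and $a_2^{(\cdot)}$. These are exactly the quantities that appear as numerators in the $T_1$, $T_2$ maps. I expect each factor to match an $a_1^{(k_1)}$ or $a_2^{(k_2)}$ at the relevant lattice point.

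\textbf{Step 3: reduce the dKdV relation.} Substitute \eqref{eqn:additive_dkdv_P3sol} into \eqref{eqn:additive_dkdv}. This gives a rational relation among the four $p$-values of Step 1. Every corner is reachable from the base point $(k_1,k_2)$ by composing the single-step maps \eqref{eqn:ODE_T1T2_P3_1}--\eqref{eqn:ODE_T1T2_P3_4}:
\begin{itemize}
\item the corner $(k_1+1,k_2+1)$ is reached by applying $T_2$ and then $T_1$;
\item the corner $(k_1+1,k_2-1)$ is reached via $T_2^{-1}$, which requires inverting \eqref{eqn:ODE_T1T2_P3_3}--\eqref{eqn:ODE_T1T2_P3_4};
\item the corner $(k_1+2,k_2)$ is reached by applying $T_1$ twice.
\end{itemize}
A cleaner route is to express everything in terms of the base point $(k_1+1,k_2-1)$, so that only forward maps are needed:
\begin{itemize}
\item $(k_1+1,k_2+1)$ is reached via $T_2^2$;
\item $(k_1+2,k_2)$ is reached via $T_1T_2$;
\item $(k_1,k_2)$ is still a backward step, $T_1^{-1}T_2$.
\end{itemize}
I would instead pick $(k_1,k_2)$ as the base and invert only $T_2$. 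From \eqref{eqn:ODE_T1T2_P3_3}--\eqref{eqn:ODE_T1T2_P3_4}, $T_2$ is birational with explicit inverse
\begin{equation*}
 T_2^{-1}=\pi_1\,\pi_2\,s_1\,\pi_1 ,
\end{equation*}
whose action is read off from the generators in \S\ref{subsection:Weyl_P3}. After substitution, the relation becomes a rational function identity in $p^{(k_1,k_2)}$, $q^{(k_1,k_2)}$, $a_1$, $a_2$ and $t$. The proof is completed by clearing denominators and checking that the resulting polynomial vanishes identically.

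\textbf{Expected obstacle.} The hard part is the growth of the rational expressions, particularly the two-step corner $(k_1+2,k_2)$ and the inverse map. To keep this manageable, I would first check how $1/p$ transforms under each step: the term $1/p^{(\cdot)}-1=(1-p)/p$ resembles the numerators $t(p-1)$ and $tp$ in $T_1(q)$ and $T_2(q)$. I would then rewrite $u_{l,m}$ at the corners through the $q$-values, using the relation $u\,u'\sim$ (product of linear parameter factors) that these maps suggest. This would turn \eqref{eqn:additive_dkdv} into an identity resembling \eqref{eqn:proof_P3_dKdV_1}, verifiable by one elimination as in the proof of Theorem~\ref{theorem:P3solutions}. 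If this shortcut fails, I fall back on brute-force symbolic verification.
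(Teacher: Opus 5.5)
Your plan is correct and is essentially the paper's approach: the paper omits the proof and says only that it goes as for Theorem~\ref{theorem:P3solutions}, i.e.\ substitution followed by elimination with the $T_1$, $T_2$ relations \eqref{eqn:ODE_T1T2_P3_1}--\eqref{eqn:ODE_T1T2_P3_4}, which is exactly your verification. The $q$-based shortcut you sketch does work cleanly: with $(k_1,k_2)=(l+m,l-m)$, \eqref{eqn:ODE_T1T2_P3_4} and \eqref{eqn:ODE_T1T2_P3_2} give $u_{l,m}=-\bigl(q^{(k_1,k_2)}+t/q^{(k_1,k_2+1)}+a_2^{(k_2)}\bigr)$ and $a_1^{(k_1)}a_2^{(k_2)}/u_{l,m}=q^{(k_1,k_2)}-a_1^{(k_1)}-t/q^{(k_1+1,k_2)}$, and since ${\be_{m+1}}^2-{\al_l}^2=-a_1^{(k_1+1)}a_2^{(k_2-1)}$ and ${\be_m}^2-{\al_{l+1}}^2=-a_1^{(k_1+1)}a_2^{(k_2+1)}$, Equation \eqref{eqn:additive_dkdv} reduces to the sum of \eqref{eqn:proof_P3_dKdV_1} with $(l,m)$ replaced by $(k_1,k_2)$ and by $(k_1+1,k_2-1)$.
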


\section{The CAC property}\label{appendix:CAC}
In this Appendix, we first explain the definition of the CAC property for {\PDE}s of a specific form, and then, in a subsection, we show that the HV equation \eqref{eqn:auto_HV} has the CAC property.
Note that the CAC property is originally a local property defined not for {\PDE}s themselves, but for relations satisfied by eight variables assigned to the vertices of a cube.
For details on the CAC property, see \cite{NQC1983:MR719638,NCWQ1984:MR763123,QNCL1984:MR761644,NS1998:zbMATH01844203,NW2001:MR1869690}.
In addition, we construct a Lax pair of the HV equation \eqref{eqn:auto_HV} using the CAC property.
See, for example, \cite{BS2002:MR1890049,NijhoffFW2002:MR1912127,WalkerAJ:thesis} for references on a method for constructing Lax pairs using the CAC property.

Let us consider the following system of \PDE s:
\begin{subequations}\label{eqns:CAC_PDE_ABCA'}
\begin{align}
 &A(u_{l,m},u_{l+1,m},u_{l,m+1},u_{l+1,m+1})=0,\label{eqn:CAC_PDE_A}\\
 &B(u_{l,m},u_{l,m+1},v_{l,m},v_{l,m+1})=0,\label{eqn:CAC_PDE_B}\\
 &C(u_{l,m},u_{l+1,m},v_{l,m},v_{l+1,m})=0,\label{eqn:CAC_PDE_C}\\
 &A'(v_{l,m},v_{l+1,m},v_{l,m+1},v_{l+1,m+1})=0.\label{eqn:CAC_PDE_AA}
\end{align}
\end{subequations}
Here, the functions $A$, $B$, $C$, and $A'$ are irreducible multilinear polynomials in four variables.
Consider the following sublattice of the lattice $\bbZ^3$:
\begin{equation}\label{eqn:lattice_01}
 \set{(l,m,0)\in\bbZ^3}{l,m\in\bbZ}\cup\set{(l,m,1)\in\bbZ^3}{l,m\in\bbZ}
\end{equation}
and assign the $u$- and $v$-variables on the vertices of the sublattice by the following correspondences:
\begin{equation}\label{eqn:lattice_uv}
 (l,m,0)\,\leftrightarrow\,u_{l,m},\qquad
 (l,m,1)\,\leftrightarrow\,v_{l,m}.
\end{equation}
Then, focusing on the cube given by the 8 points
\begin{align}
 &(l,m,0),\quad (l+1,m,0),\quad (l,m+1,0),\quad (l+1,m+1,0),\notag\\
 &(l,m,1),\quad (l+1,m,1),\quad (l,m+1,1),\quad (l+1,m+1,1),
 \label{eqn:lattice_8points}
\end{align}
from the system \eqref{eqns:CAC_PDE_ABCA'}, we obtain the following face-equations of the cube (see Figure \ref{fig:CACcube}):
\begin{subequations}\label{eqns:CAC_ABCA'}
\begin{align}
 &{\mathcal A}=A(u_{l,m},u_{l+1,m},u_{l,m+1},u_{l+1,m+1})=0,\label{eqn:CAC_A}\\
 &{\mathcal A}'=A'(v_{l,m},v_{l+1,m},v_{l,m+1},v_{l+1,m+1})=0,\label{eqn:CAC_AA}\\
 &{\mathcal B}=B(u_{l,m},u_{l,m+1},v_{l,m},v_{l,m+1})=0,\label{eqn:CAC_B}\\
 &\overline{{\mathcal B}}=B(u_{l+1,m},u_{l+1,m+1},v_{l+1,m},v_{l+1,m+1})=0,\label{eqn:CAC_BB}\\
 &{\mathcal C}=C(u_{l,m},u_{l+1,m},v_{l,m},v_{l+1,m})=0,\label{eqn:CAC_C}\\
 &\widetilde{{\mathcal C}}=C(u_{l,m+1},u_{l+1,m+1},v_{l,m+1},v_{l+1,m+1})=0.\label{eqn:CAC_CC}
\end{align}
\end{subequations}
The CAC property is defined as follows.
\begin{enumerate}
\item 
There are following three ways to calculate $v_{l+1,m+1}$ by using all the equations in the system \eqref{eqns:CAC_ABCA'} with $\{u_{l,m},u_{l+1,m},u_{l,m+1},v_{l,m}\}$ as initial values.
\begin{enumerate}
\item 
Express $v_{l+1,m+1}$ as a rational function in terms of $\{v_{l,m},v_{l+1,m},v_{l,m+1}\}$ by using Equation \eqref{eqn:CAC_AA}.
Then, eliminate $v_{l,m+1}$ by using Equation \eqref{eqn:CAC_B} and $v_{l+1,m}$ by using Equation \eqref{eqn:CAC_C} from it.\\[-0.7em]
\item 
Express $v_{l+1,m+1}$ as a rational function in terms of $\{u_{l+1,m},u_{l+1,m+1},v_{l+1,m}\}$ by using Equation \eqref{eqn:CAC_BB}.
Then, eliminate $u_{l+1,m+1}$ by using Equation \eqref{eqn:CAC_A} and $v_{l+1,m}$ by using Equation \eqref{eqn:CAC_C} from it.\\[-0.7em]
\item 
Express $v_{l+1,m+1}$ as a rational function in terms of $\{u_{l,m+1},u_{l+1,m+1},v_{l,m+1}\}$ by using Equation \eqref{eqn:CAC_CC}.
Then, eliminate $u_{l+1,m+1}$ by using Equation \eqref{eqn:CAC_A} and $v_{l,m+1}$ by using Equation \eqref{eqn:CAC_B} from it.\\[-0.7em]
\end{enumerate}
When $v_{l+1,m+1}$ is uniquely determined as a rational function with the initial values $\{v_{l,m},u_{l+1,m},u_{l,m+1},v_{l,m}\}$, then the system \eqref{eqns:CAC_PDE_ABCA'} is said to have the {\it CAC property} or said to be a {\it CAC-system}.
\item
We say that the {\PDE} \eqref{eqn:CAC_PDE_A} (or the {\PDE} \eqref{eqn:CAC_PDE_AA}) has the CAC property if the system \eqref{eqns:CAC_PDE_ABCA'} has the CAC property. 
The pair of equations \eqref{eqn:CAC_PDE_B} and \eqref{eqn:CAC_PDE_C} is then referred to as a {\it CAC-type BT} from the {\PDE} \eqref{eqn:CAC_PDE_A} to the {\PDE} \eqref{eqn:CAC_PDE_AA} (or from the {\PDE} \eqref{eqn:CAC_PDE_AA} to the {\PDE} \eqref{eqn:CAC_PDE_A}).
When the equations \eqref{eqn:CAC_PDE_A} and \eqref{eqn:CAC_PDE_AA} are the same {\PDE}, the CAC-type BT \eqref{eqn:CAC_PDE_B} and \eqref{eqn:CAC_PDE_C} is specifically called a {\it CAC-type auto-BT} of the {\PDE} \eqref{eqn:CAC_PDE_A}.
\end{enumerate}

\begin{figure}[htbp]
\begin{center}
 \includegraphics[width=0.5\textwidth]{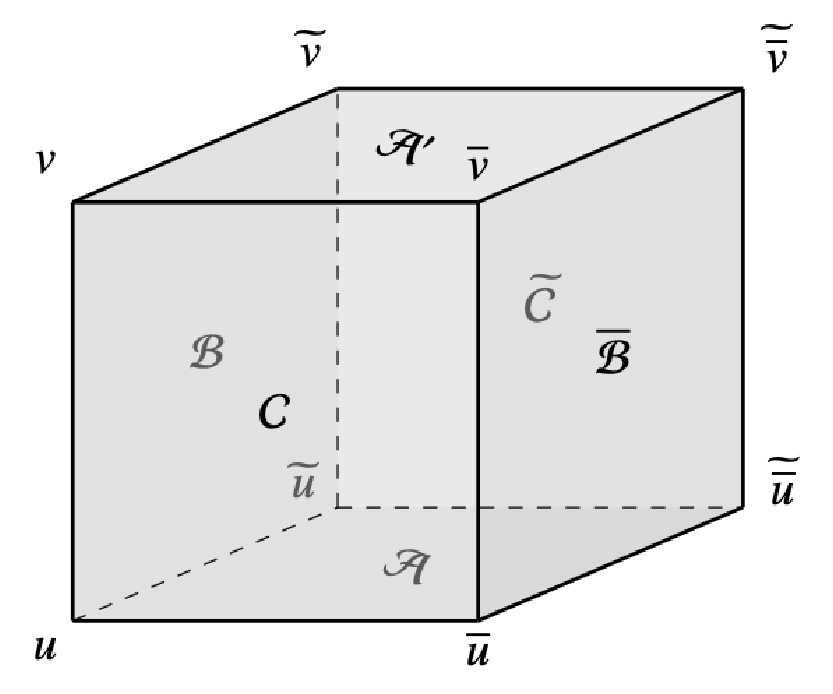}
\end{center}
\caption{A cube for the system \eqref{eqns:CAC_PDE_ABCA'}.
The $u$- and $v$-variables are assigned at the bottom and top vertices of the cube, respectively.
Also, the equations \eqref{eqns:CAC_ABCA'} are assigned to the faces of the cube.
For simplicity, we here use the shorthand notations:~
$u=u_{l,m}$,\quad
$v=v_{l,m}$,\quad
$\overline{\rule{0em}{0.3em}\hspace{0.6em}}:\,l\to l+1$,\quad
$\widetilde{}\,:\,m\to m+1$.}
\label{fig:CACcube}
\end{figure}

\subsection{The CAC property of the HV equation}
The system of {\PDE}s:
\begin{subequations}\label{eqns:HV_CACautoBT}
\begin{align}
 &(u_{l,m}-v_{l,m})(u_{l,m+1}-v_{l,m+1})-\la(u_{l,m}-u_{l,m+1})(v_{l,m}-v_{l,m+1})-\la(u_{l,m}-v_{l,m+1})=0,\label{eqn:HV_CACautoBT_1}\\
 &(u_{l,m}-v_{l,m})(u_{l+1,m}-v_{l+1,m})-\la(u_{l,m}-v_{l+1,m})=0,\label{eqn:HV_CACautoBT_2}
\end{align}
\end{subequations}
where $\la\in\bbC$,
is a CAC-type auto-BT of the HV equation \eqref{eqn:auto_HV}.
That is, the $v$-variable satisfies the following {\PDE}:
\begin{equation}
 (v_{l,m}-v_{l,m+1})(v_{l+1,m}-v_{l+1,m+1})=v_{l,m+1}-v_{l+1,m}.
\end{equation}

\begin{remark}
This CAC system corresponds to Equation (3.9)$_{\ep=0}$ in \cite{BollR:thesis}.
Therefore, in the classification of polynomials by Adler-Bobenko-Suris \cite{ABS2003:MR1962121,ABS2009:MR2503862} and Boll \cite{BollR2012:MR3010833,BollR2011:MR2846098,BollR:thesis}, the HV equation \eqref{eqn:auto_HV} and Equation \eqref{eqn:HV_CACautoBT_2} are of $H2_{\ep=0}$-type, and Equation \eqref{eqn:HV_CACautoBT_1} is of $Q1_{\de=1}$-type.
Moreover, it can be verified by direct calculation that this CAC system is strong in the sense of Hietarinta \cite{HietarintaJ2019:zbMATH07053246}.
\end{remark}

Then, a Lax pair of the the HV equation \eqref{eqn:auto_HV} is given by
\begin{equation}\label{eqn:Lax_phi}
 \phi_{l+1,m}=L_{l,m}\phi_{l,m},\quad
 \phi_{l,m+1}=M_{l,m}\phi_{l,m},
\end{equation}
where
\begin{subequations}
\begin{align}
 &L_{l,m}=\dfrac{1}{u_{l,m}-u_{l+1,m}}
 \begin{pmatrix}
  u_{l+1,m}&u_{l,m}(\la-u_{l+1,m})\\
  1&\la-u_{l,m}
 \end{pmatrix},\\
 &M_{l,m}=\dfrac{1}{u_{l,m}-u_{l,m+1}}
 \begin{pmatrix}
  \la u_{l,m}-(\la-1)u_{l,m+1}&u_{l,m}(\la-u_{l,m+1})\\
  1&(\la-1)u_{l,m}-\la u_{l,m+1}+\la
 \end{pmatrix}.
\end{align}
\end{subequations}
Indeed, the compatibility condition
\begin{equation}\label{eqn:compatibility_condition}
 L_{l,m+1}M_{l,m}=M_{l+1,m}L_{l,m}
\end{equation}
gives the HV equation \eqref{eqn:auto_HV}.
\section{The CABC property}\label{appendix:CABC}
In this Appendix, we first explain the definition of the CABC property for {\PDE}s of a specific form, and then, in subsections, we show that the dVolterra equation \eqref{eqn:auto_dV} has the CABC property.
Note that the CABC property is originally a local property defined not for {\PDE}s themselves, but for relations satisfied by eight variables assigned to the vertices of a cube.
For details on the CABC property, see \cite{JN2021:zbMATH07476241}.
In addition, we construct two Lax pairs of the dVolterra equation \eqref{eqn:auto_dV} using the CABC property.
A method for constructing Lax pairs using the CABC property is also given in \cite{JN2021:zbMATH07476241}.

Let us consider the following system of \PDE s:
\begin{subequations}\label{eqns:CABC_PDE_ASBC}
\begin{align}
 &A(u_{l,m},u_{l+1,m},u_{l,m+1},u_{l+1,m+1})=0,\label{eqn:CABC_PDE_A}\\
 &S(u_{l,m},u_{l+1,m},v_{l,m+1},v_{l+1,m+1})=0,\label{eqn:CABC_PDE_S}\\
 &B(u_{l,m},v_{l,m},v_{l,m+1})=0,\label{eqn:CABC_PDE_B}\\
 &C(u_{l,m},u_{l+1,m},v_{l,m},v_{l+1,m})=0.\label{eqn:CABC_PDE_C}
\end{align}
\end{subequations}
Here, the functions $A$, $S$, and $C$ are irreducible multilinear polynomials in four variables.
The function $B=B(x,y,z)$ is a polynomial in three variables satisfying the following:
\begin{enumerate}
\item[1)]
$\deg_x B\geq 1$,\quad
$\deg_y B=\deg_z B=1$;
\item[2)]
Let $y=f(x,z)$ be the solution of $B=0$.
Then, $f(x,z)$ is a rational function that depends on $x$ and $z$, that is, 
the following hold:
\begin{equation}
 \dfrac{\partial}{\partial x} f(x,z)\neq0,\quad
 \dfrac{\partial}{\partial z} f(x,z)\neq0.
\end{equation}
\end{enumerate}
Assign the $u$- and $v$-variables on the sublattice \eqref{eqn:lattice_01} by the correspondence \eqref{eqn:lattice_uv}.
Then, by considering the cube consisting of the eight vertices \eqref{eqn:lattice_8points} (see Figure \ref{fig:CABCcube}), the system of equations around the cube obtained from the system \eqref{eqns:CABC_PDE_ASBC} is the following:
\begin{subequations}\label{eqns:CABC_ASBC}
\begin{align}
 &{\mathcal A}=A(u_{l,m},u_{l+1,m},u_{l,m+1},u_{l+1,m+1})=0,\label{eqn:CABC_A}\\
 &{\mathcal S}=S(u_{l,m},u_{l+1,m},v_{l,m+1},v_{l+1,m+1})=0,\label{eqn:CABC_S}\\
 &{\mathcal B}=B(u_{l,m},v_{l,m},v_{l,m+1})=0,\label{eqn:CABC_B}\\
 &\overline{{\mathcal B}}=B(u_{l+1,m},v_{l+1,m},v_{l+1,m+1})=0,\label{eqn:CABC_BB}\\
 &{\mathcal C}=C(u_{l,m},u_{l+1,m},v_{l,m},v_{l+1,m})=0,\label{eqn:CABC_C}\\
 &\widetilde{{\mathcal C}}=C(u_{l,m+1},u_{l+1,m+1},v_{l,m+1},v_{l+1,m+1})=0.\label{eqn:CABC_CC}
\end{align}
\end{subequations}

\begin{figure}[htbp]
\begin{center}
 \includegraphics[width=0.5\textwidth]{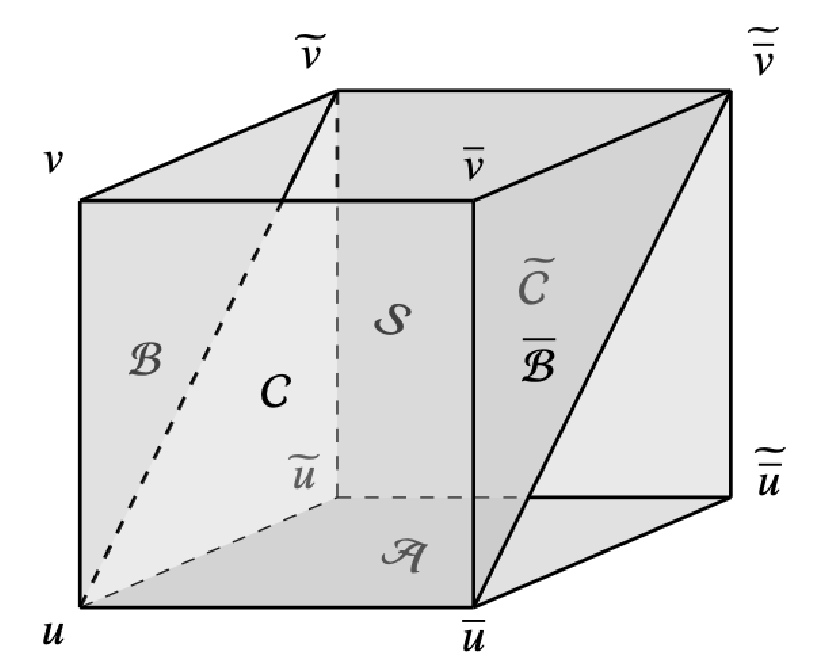}
\end{center}
\caption{A cube for the system \eqref{eqns:CABC_PDE_ASBC}.
The $u$- and $v$-variables are assigned at the bottom and top vertices of the cube, respectively.
Also, each of the equations \eqref{eqns:CABC_ASBC} is assigned to a face of the cube.
Note that the face for Equation \eqref{eqn:CABC_S} corresponds to the cutting plane bisecting the cube diagonally, and the faces for the equations \eqref{eqn:CABC_B} and \eqref{eqn:CABC_BB} correspond to the upper triangles.
For simplicity, we here use the shorthand notations:~
$u=u_{l,m}$,\quad
$v=v_{l,m}$,\quad
$\overline{\rule{0em}{0.3em}\hspace{0.6em}}:\,l\to l+1$,\quad
$\widetilde{}\,:\,m\to m+1$.}
\label{fig:CABCcube}
\end{figure}

The CABC property is defined as follows.
\begin{enumerate}
\item 
There are following three ways to calculate $v_{l+1,m+1}$ by using all the equations in the system \eqref{eqns:CABC_ASBC} with $\{u_{l,m},u_{l+1,m},u_{l,m+1},v_{l,m}\}$ as initial values.
\begin{enumerate}
\item 
Express $v_{l+1,m+1}$ as a rational function in terms of $\{u_{l,m},u_{l+1,m},v_{l,m+1}\}$ by using Equation \eqref{eqn:CABC_S}.
Then, eliminate $v_{l,m+1}$ from it by using Equation \eqref{eqn:CABC_B}.\\[-0.7em]
\item 
Express $v_{l+1,m+1}$ as a rational function in terms of $\{u_{l+1,m},v_{l+1,m}\}$ by using Equation \eqref{eqn:CABC_BB}.
Then, eliminate $v_{l+1,m}$ from it by using Equation \eqref{eqn:CABC_C}.\\[-0.7em]
\item 
Express $v_{l+1,m+1}$ as a rational function in terms of $\{u_{l,m+1},u_{l+1,m+1},v_{l,m+1}\}$ by using Equation \eqref{eqn:CABC_CC}.
Then, eliminate $u_{l+1,m+1}$ by using Equation \eqref{eqn:CABC_A} and $v_{l,m+1}$ by using Equation \eqref{eqn:CABC_B} from it.\\[-0.7em]
\end{enumerate}
When $v_{l+1,m+1}$ is uniquely determined as a rational function with the initial values $\{u_{l,m},u_{l+1,m},u_{l,m+1},v_{l,m}\}$, 
then the system \eqref{eqns:CABC_PDE_ASBC} is said to have the {\it CABC property} or said to be a {\it CABC-system};
the {\PDE} \eqref{eqn:CABC_PDE_A} is said to have the CABC property. 
\item 
If there exists a {\PDE} given only by $\{v_{l,m},v_{l+1,m},v_{l,m+1},v_{l+1,m+1}\}$, then the tuple of equations \eqref{eqn:CABC_PDE_S}, \eqref{eqn:CABC_PDE_B} and \eqref{eqn:CABC_PDE_C} is referred to as a {\it CABC-type BT} from the {\PDE} \eqref{eqn:CABC_PDE_A} to the {\PDE} given by $\{v_{l,m},v_{l+1,m},v_{l,m+1},v_{l+1,m+1}\}$.
\end{enumerate}
\subsection{The CABC property of the dVolterra equation (I)}
The system of {\PDE}s:
\begin{subequations}
\begin{align}
 &(1+u_{l+1,m})u_{l,m}v_{l,m+1}v_{l+1,m+1}-\la v_{l+1,m+1}+1=0,\\
 &v_{l,m+1}=\dfrac{\la(1+u_{l,m})v_{l,m}-1}{u_{l,m}(v_{l,m}+u_{l,m}v_{l,m}+\la)},\\
 &(1+u_{l,m})u_{l+1,m}v_{l,m}v_{l+1,m}-\la v_{l+1,m}+1=0,
\end{align}
\end{subequations}
where $\la\in\bbC$,
is a CABC-type BT from the dVolterra equation \eqref{eqn:auto_dV} to the following multiquadratic quad-equation:
\begin{align}
 &(v_{l,m}v_{l+1,m}-v_{l,m+1}v_{l+1,m+1})^2-v_{l,m}v_{l+1,m}v_{l,m+1}v_{l+1,m+1}(v_{l,m}-v_{l+1,m+1})(v_{l+1,m}-v_{l,m+1})\notag\\
 &=\la(v_{l,m}v_{l+1,m}-v_{l,m+1}v_{l+1,m+1})\Big(v_{l,m}v_{l+1,m+1}(v_{l+1,m}-v_{l,m+1})\notag\\
 &\hspace{12.5em}+v_{l+1,m}v_{l,m+1}(v_{l,m}-v_{l+1,m+1})\Big).
\end{align}

Then, a Lax pair of the dVolterra equation \eqref{eqn:auto_dV} is given by \eqref{eqn:Lax_phi} where 
\begin{equation}
 L_{l,m}=
 \begin{pmatrix}
 0&1\\
 -(1+u_{l,m})u_{l+1,m}&\la
 \end{pmatrix},\quad
 M_{l,m}=
 \begin{pmatrix}
 \dfrac{\la(1+u_{l,m})}{u_{l,m}}&-\dfrac{1}{u_{l,m}}\\
 1+u_{l,m}&\la
 \end{pmatrix}.
\end{equation}
Indeed, the compatibility condition \eqref{eqn:compatibility_condition} gives the dVolterra equation \eqref{eqn:auto_dV}.
\subsection{The CABC property of the dVolterra equation (II)}
Here, we consider the dVolterra equation in the following form, obtained from \eqref{eqn:auto_dV} by setting $u_{k_1,k_2}\to u_{k_2,k_1}$:
\begin{equation}\label{eqn:auto_dV_2}
 \dfrac{~u_{l+1,m+1}~}{u_{l,m}}=\dfrac{1+u_{l,m+1}}{1+u_{l+1,m}}.
\end{equation}
The system of {\PDE}s:
\begin{subequations}
\begin{align}
&\dfrac{1+u_{l+1,m}}{u_{l,m}}=-\left(\dfrac{1+\la v_{l+1,m+1}}{\la-v_{l+1,m+1}}\right)\left(\dfrac{\la+v_{l,m+1}}{1-\la v_{l,m+1}}\right),\\
&u_{l,m}=-\dfrac{(\la-v_{l,m})(1-\la v_{l,m+1})}{2\la(1+v_{l,m}v_{l,m+1})},\\
&\dfrac{1+u_{l,m}}{u_{l+1,m}}=-\left(\dfrac{1+\la v_{l+1,m}}{\la-v_{l+1,m}}\right)\left(\dfrac{\la+v_{l,m}}{1-\la v_{l,m}}\right),
\end{align}
\end{subequations}
where $\la\in\bbC$,
 is a CABC-type BT from the dVolterra equation \eqref{eqn:auto_dV_2} to the following quad-equation:
\begin{align}
 &(v_{l,m}+v_{l+1,m})(1-v_{l,m+1}v_{l+1,m+1})-(v_{l,m+1}+v_{l+1,m+1})(1-v_{l,m}v_{l+1,m})\notag\\
 &=-\dfrac{1+\la^2}{\la}(v_{l,m}v_{l,m+1}-v_{l+1,m}v_{l+1,m+1}).
\end{align}

Then, a Lax pair of the dVolterra equation \eqref{eqn:auto_dV_2} is given by \eqref{eqn:Lax_phi} where
\begin{subequations}\label{eqns:Lax_LV}
\begin{align}
 &L_{l,m}=\dfrac{1}{{u_{l+1,m}}^{1/2}(1+u_{l,m})^{1/2}}
 \begin{pmatrix}
 \la^2+\la^2u_{l,m}-u_{l+1,m}&-\la(1+u_{l,m}+u_{l+1,m})\\
 \la(1+u_{l,m}+u_{l+1,m})&-1-u_{l,m}+\la^2u_{l+1,m}
 \end{pmatrix},\\[1em]
 &M_{l,m}=\dfrac{1}{{u_{l,m}}^{1/2}(1+u_{l,m})^{1/2}}
 \begin{pmatrix}
 -1&\la(1+2u_{l,m})\\
 -\la(1+2u_{l,m})&\la^2
 \end{pmatrix}.
\end{align}
\end{subequations}
Indeed, the compatibility condition \eqref{eqn:compatibility_condition} gives the dVolterra equation \eqref{eqn:auto_dV_2}.
\def\cprime{$'$} \def\cprime{$'$}


\begin{thebibliography}{10}

\bibitem{ABS2003:MR1962121}
V.~E. Adler, A.~I. Bobenko, and Y.~B. Suris.
\newblock Classification of integrable equations on quad-graphs. {T}he
  consistency approach.
\newblock {\em Comm. Math. Phys.}, 233(3):513--543, 2003.

\bibitem{ABS2009:MR2503862}
V.~E. Adler, A.~I. Bobenko, and Y.~B. Suris.
\newblock Discrete nonlinear hyperbolic equations: classification of integrable
  cases.
\newblock {\em Funktsional. Anal. i Prilozhen.}, 43(1):3--21, 2009.

\bibitem{bobenko1993discrete}
A.~Bobenko, N.~Kutz, and U.~Pinkall.
\newblock The discrete quantum pendulum.
\newblock {\em Physics Letters A}, 177(6):399--404, 1993.

\bibitem{BS2002:MR1890049}
A.~I. Bobenko and Y.~B. Suris.
\newblock Integrable systems on quad-graphs.
\newblock {\em Int. Math. Res. Not. IMRN}, (11):573--611, 2002.

\bibitem{BS2008:zbMATH05486618}
A.~I. Bobenko and Y.~B. Suris.
\newblock {\em Discrete differential geometry. {Integrable} structure},
  volume~98 of {\em Grad. Stud. Math.}
\newblock Providence, RI: American Mathematical Society (AMS), 2008.

\bibitem{BollR2011:MR2846098}
R.~Boll.
\newblock Classification of 3{D} consistent quad-equations.
\newblock {\em J. Nonlinear Math. Phys.}, 18(3):337--365, 2011.

\bibitem{BollR:thesis}
R.~Boll.
\newblock Classification and {L}agrangian {S}tructure of 3{D} {C}onsistent
  {Q}uad-{E}quations.
\newblock {\em Doctoral Thesis, Technische Universit\"at Berlin}, 2012.

\bibitem{BollR2012:MR3010833}
R.~Boll.
\newblock Corrigendum: {C}lassification of 3{D} consistent quad-equations.
\newblock {\em J. Nonlinear Math. Phys.}, 19(4):1292001, 3, 2012.

\bibitem{FJN2008:MR2425981}
C.~M. Field, N.~Joshi, and F.~W. Nijhoff.
\newblock {$q$}-difference equations of {K}d{V} type and {C}hazy-type
  second-degree difference equations.
\newblock {\em J. Phys. A}, 41(33):332005, 13, 2008.

\bibitem{FuchsR1905:quelques}
R.~Fuchs.
\newblock Sur quelques \'equations diff\'erentielles lin\'eaires du second
  ordre.
\newblock {\em Comptes Rendus de l'Acad\'emie des Sciences Paris},
  141(1):555--558, 1905.

\bibitem{GambierB1910:MR1555055}
B.~Gambier.
\newblock Sur les \'equations diff\'erentielles du second ordre et du premier
  degr\'e dont l'int\'egrale g\'en\'erale est a points critiques fixes.
\newblock {\em Acta Math.}, 33(1):1--55, 1910.

\bibitem{GarnierR1912:MR1509146}
R.~Garnier.
\newblock Sur des \'equations diff\'erentielles du troisi\`eme ordre dont
  l'int\'egrale g\'en\'erale est uniforme et sur une classe d'\'equations
  nouvelles d'ordre sup\'erieur dont l'int\'egrale g\'en\'erale a ses points
  critiques fixes.
\newblock {\em Ann. Sci. \'Ecole Norm. Sup. (3)}, 29:1--126, 1912.

\bibitem{GRSWC2005:MR2117991}
B.~Grammaticos, A.~Ramani, J.~Satsuma, R.~Willox, and A.~S. Carstea.
\newblock Reductions of integrable lattices.
\newblock {\em J. Nonlinear Math. Phys.}, 12(suppl. 1):363--371, 2005.

\bibitem{HHJN2007:MR2303490}
M.~Hay, J.~Hietarinta, N.~Joshi, and F.~W. Nijhoff.
\newblock A {L}ax pair for a lattice modified {K}d{V} equation, reductions to
  {$q$}-{P}ainlev\'e equations and associated {L}ax pairs.
\newblock {\em J. Phys. A}, 40(2):F61--F73, 2007.

\bibitem{HHNS2015:MR3317164}
M.~Hay, P.~Howes, N.~Nakazono, and Y.~Shi.
\newblock A systematic approach to reductions of type-{Q} {ABS} equations.
\newblock {\em J. Phys. A}, 48(9):095201, 24, 2015.

\bibitem{HietarintaJ2019:zbMATH07053246}
J.~Hietarinta.
\newblock Search for {CAC}-integrable homogeneous quadratic triplets of quad
  equations and their classification by {BT} and {Lax}.
\newblock {\em J. Nonlinear Math. Phys.}, 26(3):358--389, 2019.

\bibitem{HJN2016:MR3587455}
J.~Hietarinta, N.~Joshi, and F.~W. Nijhoff.
\newblock {\em Discrete systems and integrability}.
\newblock Cambridge Texts in Applied Mathematics. Cambridge University Press,
  Cambridge, 2016.

\bibitem{HV2007:Searching}
J.~Hietarinta and C.~Viallet.
\newblock Searching for integrable lattice maps using factorization.
\newblock {\em Journal of Physics A: Mathematical and Theoretical},
  40(42):12629--12643, 2007.

\bibitem{HirotaR1977:MR0460934}
R.~Hirota.
\newblock Nonlinear partial difference equations. {I}. {A} difference analogue
  of the {K}orteweg-de {V}ries equation.
\newblock {\em J. Phys. Soc. Japan}, 43(4):1424--1433, 1977.

\bibitem{hirota1976n}
R.~Hirota and J.~Satsuma.
\newblock N-soliton solutions of nonlinear network equations describing a
  volterra system.
\newblock {\em Journal of the Physical Society of Japan}, 40(3):891--900, 1976.

\bibitem{hirota1995conserved}
R.~Hirota and S.~Tsujimoto.
\newblock Conserved quantities of a class of nonlinear difference-difference
  equations.
\newblock {\em Journal of the Physical Society of Japan}, 64(9):3125--3127,
  1995.

\bibitem{JKMN2021:zbMATH07653201}
N.~Joshi, K.~Kajiwara, T.~Masuda, and N.~Nakazono.
\newblock Discrete power functions on a hexagonal lattice. {I}: {Derivation} of
  defining equations from the symmetry of the {Garnier} system in two
  variables.
\newblock {\em J. Phys. A, Math. Theor.}, 54(33):27, 2021.
\newblock Id/No 335202.

\bibitem{JKMNS2017:MR3741826}
N.~Joshi, K.~Kajiwara, T.~Masuda, N.~Nakazono, and Y.~Shi.
\newblock Geometric description of a discrete power function associated with
  the sixth {P}ainlev\'{e} equation.
\newblock {\em Proc. R. Soc. A.}, 473(2207):20170312, 19, 2017.

\bibitem{JN2021:zbMATH07476241}
N.~Joshi and N.~Nakazono.
\newblock On the three-dimensional consistency of {Hirota}'s discrete
  {Korteweg}-de {Vries} equation.
\newblock {\em Stud. Appl. Math.}, 147(4):1409--1424, 2021.

\bibitem{KM1975:MR0369953}
M.~Kac and P.~van Moerbeke.
\newblock On an explicitly soluble system of nonlinear differential equations
  related to certain {T}oda lattices.
\newblock {\em Advances in Math.}, 16:160--169, 1975.

\bibitem{KNY2017:MR3609039}
K.~Kajiwara, M.~Noumi, and Y.~Yamada.
\newblock Geometric aspects of {P}ainlev\'{e} equations.
\newblock {\em J. Phys. A}, 50(7):073001, 164, 2017.

\bibitem{KimuraH1990:MR1078402}
H.~Kimura.
\newblock Symmetries of the {G}arnier system and of the associated polynomial
  {H}amiltonian system.
\newblock {\em Proc. Japan Acad. Ser. A Math. Sci.}, 66(7):176--178, 1990.

\bibitem{KO1984:MR776915}
H.~Kimura and K.~Okamoto.
\newblock On the polynomial {H}amiltonian structure of the {G}arnier systems.
\newblock {\em J. Math. Pures Appl. (9)}, 63(1):129--146, 1984.

\bibitem{KDV1895:zbMATH02679684}
D.~J. Korteweg and G.~De~Vries.
\newblock On the change of form of long waves advancing in a rectangular canal,
  and on a new type of long stationary waves.
\newblock {\em Phil. Mag. (5)}, 39:422--443, 1895.

\bibitem{NakazonoN2018:MR3760161}
N.~Nakazono.
\newblock Reduction of lattice equations to the {P}ainlev\'e equations: {$\rm
  P_{IV}$} and {$\rm P_V$}.
\newblock {\em J. Math. Phys.}, 59(2):022702, 18, 2018.

\bibitem{nakazono2022discrete}
N.~Nakazono.
\newblock Discrete {P}ainlev\'e transcendent solutions to the
  multiplicative-type discrete {K}d{V} equations.
\newblock {\em Journal of Mathematical Physics}, 63(4):042703, 2022.

\bibitem{nakazono2024consistency}
N.~Nakazono.
\newblock Consistency around a cube property of {H}irota's discrete {K}d{V}
  equation and the lattice sine-{G}ordon equation.
\newblock {\em Applied Numerical Mathematics}, 199:136--152, 2024.

\bibitem{nakazono2026variationdKdV}
N.~Nakazono.
\newblock Solutions to an autonomous discrete kdv equation via painlev\'e-type.
\newblock {\em arXiv preprint arXiv:2503.06013}, 2025.

\bibitem{NijhoffFW2002:MR1912127}
F.~W. Nijhoff.
\newblock Lax pair for the {A}dler (lattice {K}richever-{N}ovikov) system.
\newblock {\em Phys. Lett. A}, 297(1-2):49--58, 2002.

\bibitem{NC1995:MR1329559}
F.~W. Nijhoff and H.~W. Capel.
\newblock The discrete {K}orteweg-de {V}ries equation.
\newblock {\em Acta Appl. Math.}, 39(1-3):133--158, 1995.
\newblock KdV '95 (Amsterdam, 1995).

\bibitem{NCWQ1984:MR763123}
F.~W. Nijhoff, H.~W. Capel, G.~L. Wiersma, and G.~R.~W. Quispel.
\newblock B\"acklund transformations and three-dimensional lattice equations.
\newblock {\em Phys. Lett. A}, 105(6):267--272, 1984.

\bibitem{NP1991:MR1098879}
F.~W. Nijhoff and V.~G. Papageorgiou.
\newblock Similarity reductions of integrable lattices and discrete analogues
  of the {P}ainlev\'e {${\rm II}$} equation.
\newblock {\em Phys. Lett. A}, 153(6-7):337--344, 1991.

\bibitem{NQC1983:MR719638}
F.~W. Nijhoff, G.~R.~W. Quispel, and H.~W. Capel.
\newblock Direct linearization of nonlinear difference-difference equations.
\newblock {\em Phys. Lett. A}, 97(4):125--128, 1983.

\bibitem{NRGO2001:MR1819383}
F.~W. Nijhoff, A.~Ramani, B.~Grammaticos, and Y.~Ohta.
\newblock On discrete {P}ainlev\'e equations associated with the lattice
  {K}d{V} systems and the {P}ainlev\'e {VI} equation.
\newblock {\em Stud. Appl. Math.}, 106(3):261--314, 2001.

\bibitem{NW2001:MR1869690}
F.~W. Nijhoff and A.~J. Walker.
\newblock The discrete and continuous {P}ainlev\'e {VI} hierarchy and the
  {G}arnier systems.
\newblock {\em Glasg. Math. J.}, 43A:109--123, 2001.
\newblock Integrable systems: linear and nonlinear dynamics (Islay, 1999).

\bibitem{NS1998:zbMATH01844203}
J.~J.~C. Nimmo and W.~K. Schief.
\newblock An integrable discretization of a {{\((2+1)\)}}-dimensional
  sine-{Gordon} equation.
\newblock {\em Stud. Appl. Math.}, 100(3):295--309, 1998.

\bibitem{book_NoumiM2004:MR2044201}
M.~Noumi.
\newblock {\em Painlev\'e equations through symmetry}, volume 223 of {\em
  Translations of Mathematical Monographs}.
\newblock American Mathematical Society, Providence, RI, 2004.
\newblock Translated from the 2000 Japanese original by the author.

\bibitem{OKSO2006:MR2277519}
Y.~Ohyama, H.~Kawamuko, H.~Sakai, and K.~Okamoto.
\newblock Studies on the {P}ainlev\'e equations. {V}. {T}hird {P}ainlev\'e
  equations of special type {$P_{\rm III}(D_7)$} and {$P_{\rm III}(D_8)$}.
\newblock {\em J. Math. Sci. Univ. Tokyo}, 13(2):145--204, 2006.

\bibitem{OkamotoK1986:MR854008}
K.~Okamoto.
\newblock Studies on the {P}ainlev\'e equations. {III}. {S}econd and fourth
  {P}ainlev\'e equations, {$P_{{\rm II}}$} and {$P_{{\rm IV}}$}.
\newblock {\em Math. Ann.}, 275(2):221--255, 1986.

\bibitem{OkamotoK1987:MR916698}
K.~Okamoto.
\newblock Studies on the {P}ainlev\'e equations. {I}. {S}ixth {P}ainlev\'e
  equation {$P_{{\rm VI}}$}.
\newblock {\em Ann. Mat. Pura Appl. (4)}, 146:337--381, 1987.

\bibitem{OkamotoK1987:MR914314}
K.~Okamoto.
\newblock Studies on the {P}ainlev\'e equations. {II}. {F}ifth {P}ainlev\'e
  equation {$P_{\rm V}$}.
\newblock {\em Japan. J. Math. (N.S.)}, 13(1):47--76, 1987.

\bibitem{OkamotoK1987:MR927186}
K.~Okamoto.
\newblock Studies on the {P}ainlev\'e equations. {IV}. {T}hird {P}ainlev\'e
  equation {$P_{{\rm III}}$}.
\newblock {\em Funkcial. Ekvac.}, 30(2-3):305--332, 1987.

\bibitem{OK1986:zbMATH03962350}
K.~Okamoto and H.~Kimura.
\newblock On particular solutions of the {Garnier} systems and the
  hypergeometric functions of several variables.
\newblock {\em Q. J. Math., Oxf. II. Ser.}, 37:61--80, 1986.

\bibitem{OrmerodCM2012:MR2997166}
C.~M. Ormerod.
\newblock Reductions of lattice m{K}d{V} to {$q$}-{${\rm P}_{\rm VI}$}.
\newblock {\em Phys. Lett. A}, 376(45):2855--2859, 2012.

\bibitem{OrmerodCM2014:MR3210633}
C.~M. Ormerod.
\newblock Symmetries and special solutions of reductions of the lattice
  potential {K}d{V} equation.
\newblock {\em SIGMA Symmetry Integrability Geom. Methods Appl.}, 10:Paper 002,
  19, 2014.

\bibitem{PainleveP1900:zbMATH02665472}
P.~Painlev{\'e}.
\newblock M{\'e}moire sur les {\'e}quations diff{\'e}rentielles dont
  l'int{\'e}grale g{\'e}n{\'e}rale est uniforme.
\newblock {\em Bull. Soc. Math. Fr.}, 28:201--261, 1900.

\bibitem{PainleveP1902:MR1554937}
P.~Painlev{\'e}.
\newblock Sur les \'equations diff\'erentielles du second ordre et d'ordre
  sup\'erieur dont l'int\'egrale g\'en\'erale est uniforme.
\newblock {\em Acta Math.}, 25(1):1--85, 1902.

\bibitem{PainleveP1907:zbMATH02647172}
P.~Painlev{\'e}.
\newblock Sur les {\'e}quations diff{\'e}rentielles du second ordre {\`a}
  points critiques fixes.
\newblock {\em C. R. Acad. Sci., Paris}, 143:1111--1117, 1907.

\bibitem{QNCL1984:MR761644}
G.~R.~W. Quispel, F.~W. Nijhoff, H.~W. Capel, and J.~van~der Linden.
\newblock Linear integral equations and nonlinear difference-difference
  equations.
\newblock {\em Phys. A}, 125(2-3):344--380, 1984.

\bibitem{SakaiH2001:MR1882403}
H.~Sakai.
\newblock Rational surfaces associated with affine root systems and geometry of
  the {P}ainlev\'e equations.
\newblock {\em Comm. Math. Phys.}, 220(1):165--229, 2001.

\bibitem{SuzukiT2005:MR2177118}
T.~Suzuki.
\newblock Affine {W}eyl group symmetry of the {G}arnier system.
\newblock {\em Funkcial. Ekvac.}, 48(2):203--230, 2005.

\bibitem{TakenawaT2024:RIMS}
T.~Takenawa.
\newblock {S}pace of initial conditions for the four-dimensional {G}arnier
  system revisited ({R}ecent developments in mathematics of integrable
  systems).
\newblock {\em RIMS K\^oky\^uroku Bessatsu}, B96:117--130, 2024.

\bibitem{TX2009:zbMATH05559172}
D.~Tsoubelis and P.~Xenitidis.
\newblock Continuous symmetric reductions of the {Adler}-{Bobenko}-{Suris}
  equations.
\newblock {\em J. Phys. A, Math. Theor.}, 42(16):29, 2009.
\newblock Id/No 165203.

\bibitem{TsudaT2003:MR1987136}
T.~Tsuda.
\newblock Birational symmetries, {H}irota bilinear forms and special solutions
  of the {G}arnier systems in 2-variables.
\newblock {\em J. Math. Sci. Univ. Tokyo}, 10(2):355--371, 2003.

\bibitem{TsudaT2003:MR1998470}
T.~Tsuda.
\newblock Rational solutions of the {G}arnier system in terms of {S}chur
  polynomials.
\newblock {\em Int. Math. Res. Not. IMRN}, (43):2341--2358, 2003.

\bibitem{TsudaT2006:MR2263717}
T.~Tsuda.
\newblock Toda equation and special polynomials associated with the {G}arnier
  system.
\newblock {\em Adv. Math.}, 206(2):657--683, 2006.

\bibitem{volkov1992quantum}
A.~Y. Volkov and L.~D. Faddeev.
\newblock Quantum inverse scattering method on a spacetime lattice.
\newblock {\em Theoretical and Mathematical Physics}, 92(2):837--842, 1992.

\bibitem{WalkerAJ:thesis}
A.~Walker.
\newblock Similarity reductions and integrable lattice equations.
\newblock {\em Ph.D. Thesis, University of Leeds}, 2001.

\bibitem{YamadaY2009:zbMATH05588071}
Y.~Yamada.
\newblock Pad{\'e} method to {Painlev{\'e}} equations.
\newblock {\em Funkc. Ekvacioj, Ser. Int.}, 52(1):83--92, 2009.

\end{thebibliography}
\end{document}